\newcolumntype{+}{!{\vrule width 2pt}}
\newlength\savedwidth
\renewcommand{\@biblabel}[1]{\quad#1.}
\def\ratio{\gamma}
\def\ratio{\gamma}
\newtheorem{theorem}{Theorem}
\newtheorem{lemma}{Lemma}
\theoremstyle{definition}
\def\eps{\varepsilon}
\def\abd{\mathcal{A}^{Bd}}
\def\adb{\mathcal{A}^{dB}}
\def\eps{\varepsilon}
\def\calO{\mathcal{O}}
\def\deg{\operatorname{deg}}
\def\fp{\rho^{\operatorname{Bd}}_r} 
\def\fpd{\rho^{\operatorname{dB}}_r} 
\def\fpn{\rho^{\operatorname{Bd}}_{r=1}}
\def\fpdn{\rho^{\operatorname{dB}}_{r=1}}
\def\gam{\gamma^{\operatorname{Bd}}_r}
\def\gamd{\gamma^{\operatorname{dB}}_r}
\def\f{f} 
\def\w{w} 
\def\pbd{p_{\operatorname{Bd},r}}
\def\pdb{p_{\operatorname{dB},r}}
\def\ratio{\gamma}
\def\deg{\operatorname{deg}}
\def\fp{\rho^{\operatorname{Bd}}_r} 
\def\fpd{\rho^{\operatorname{dB}}_r} 
\def\fpn{\rho^{\operatorname{Bd}}_{r=1}}
\def\fpdn{\rho^{\operatorname{dB}}_{r=1}}
\def\Abd{A^{\operatorname{Bd}}} 
\def\Adb{A^{\operatorname{dB}}} 
\def\gam{\gamma^{\operatorname{Bd}}_r}
\def\gamd{\gamma^{\operatorname{dB}}_r}
\def\f{f} 
\def\w{w} 
\def\SI{Supplementary Information}
\begin{document}
\vspace*{0.2in}

\begin{flushleft}
{\Large
\textbf\newline{Amplifiers of selection for the Moran process with both Birth-death and death-Birth updating} 
}
\newline
\\
Jakub Svoboda\textsuperscript{1},
Soham Joshi\textsuperscript{1},
Josef Tkadlec\textsuperscript{2,3},
Krishnendu Chatterjee\textsuperscript{1}
\\
\bigskip
\textbf{1} IST Austria, Klosterneuburg, Austria
\\
\textbf{2} Department of Mathematics, Harvard University, Cambridge, MA 02138, USA 
\\
\textbf{3} Computer Science Institute, Charles University, Prague, Czech Republic
\\
\bigskip

%
%






\end{flushleft}

\section*{Abstract}


Populations evolve by accumulating advantageous mutations.
Every population has some spatial structure that can be modeled by an underlying network. The network then influences the probability that new advantageous mutations fixate.
Amplifiers of selection are networks that increase the fixation probability of advantageous mutants, as compared to the unstructured fully-connected network.
Whether or not a network is an amplifier depends on the choice of the random process that governs the evolutionary dynamics.
Two popular choices are Moran process with Birth-death updating and Moran process with death-Birth updating.
Interestingly, while some networks are amplifiers under  Birth-death updating and other networks are amplifiers under death-Birth updating, no network is known to function as an amplifier under both types of updating simultaneously.
In this work, we identify networks that act as amplifiers of selection under both versions of the Moran process.
The amplifiers are robust, modular, and increase fixation probability for any mutant fitness advantage in a range $r\in(1,1.2)$.
To complement this positive result, we also prove that for certain quantities closely related to fixation probability, it is impossible to improve them simultaneously for both versions of the Moran process.
Together, our results highlight
how the two versions of the Moran process differ and what they have in common.

\section*{Author summary}
The long-term fate of an evolving population depends on its spatial structure.
Amplifiers of selection are spatial structures that enhance the probability that a new advantageous mutation propagates through the whole population, as opposed to going extinct.
Many amplifiers of selection are known when the population evolves according to the Moran Birth-death updating, and several amplifiers are known for the Moran death-Birth updating.
Interestingly, none of the spatial structures that work for one updating seem to work for the other one.
Nevertheless, in this work we identify spatial structures that function as amplifiers of selection for both types of updating.
We also prove two negative results that suggest that stumbling upon such spatial structures by pure chance is unlikely.

\section*{Introduction}

Moran process is a classic stochastic process that models natural selection in populations of asexually reproducing individuals, especially when new mutations are rare~\cite{moran1958random,ewens2004mathematical}.
It is commonly used to understand the fate of a single new mutant, as it attempts to invade a population of indistinguishable residents. Eventually, the new mutation will either fixate on the whole population, or it will go extinct.
It is known that when the invading mutant has relative fitness advantage $r>1$ as compared to the residents, this fixation probability tends to a positive constant $1-1/r$ as the population size $N$ grows large.

On spatially structured populations, fixation probability of an invading mutant can both increase or decrease.
In the framework of evolutionary graph theory~\cite{lieberman2005evolutionary,nowak2006evolutionary}, the spatial structure is represented by a graph (network) in which nodes (vertices) correspond to individual sites, and edges (connections) correspond to possible migration patterns. 
Each edge is assigned a weight that represents the strength of the connection. 
Such network-based spatial structures can represent island models, metapopulations, lattices, as well as other arbitrarily complex structures~\cite{yagoobi2021fixation,marrec2021toward,svoboda2023coexistence,yagoobi2023categorizing,tkadlec2023evolutionary}.
Spatial structures that increase the fixation probability of a randomly occurring advantageous mutant beyond the constant $1-1/r$ are called amplifiers of selection~\cite{adlam2015amplifiers}. The logic behind the name is that living on such a structure effectively amplifies the fitness advantage that the mutants has, as compared to living on the unstructured (well-mixed) population.
Identifying amplifiers is desirable, since they could potentially serve as tools in accelerating the evolutionary search, especially when new mutations are rare~\cite{frean2013effect,tkadlec2019population}.

When run on a spatial structure, Moran process can be implemented in two distinct versions.
They are called Moran Birth-death process and Moran death-Birth process.
In the Moran Birth-death process, first an individual is selected for reproduction with probability proportional to its fitness, and the offspring then replaces a random neighbor.
In contrast, in the Moran death-Birth process, first a random individual dies and then its neighbors compete to fill up the vacant site (see~\cref{fig:moran}).
Both the Moran Bd-updating~\cite{moran1958random,lieberman2005evolutionary,brendborg2022fixation} and the Moran dB-updating~\cite{clifford1973model,komarova2006spatial,allen2017evolutionary,richter2021spectral} have been studied extensively.
While essentially identical on the unstructured population, the two versions of the process yield different results when run on most spatial structures~\cite{antal2006evolutionary,broom2008analysis,kaveh2015duality}.

\begin{figure}[h]
  \centering
   \includegraphics[width=\linewidth]{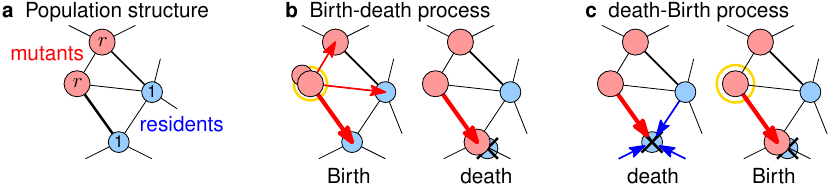}
\caption{
\textbf{Moran Birth-death and death-Birth processes on a population structure.}
\textbf{a,} Each node is occupied by a resident with fitness 1 (blue), or a mutant with fitness $r\ge 1$ (red). Thicker edges denote higher edge weights (stronger interactions).
\textbf{b,} In Moran Birth-death process, a random individual reproduces, and the produced offspring migrates along a random edge.
\textbf{c,} In Moran death-Birth process, a random individual dies, and the vacancy is filled by a random neighbor. In both cases, edges with higher weight are selected more often, and fitness plays a role in the Birth step but not in the death step.
}
\label{fig:moran}
\end{figure}

In the world of the Bd-updating, amplifiers are ubiquitous~\cite{hindersin2015most,pavlogiannis2018construction,tkadlec2019population,moller2019exploring,pavlogiannis2017amplification}.
Almost all small spatial structures function as amplifiers of selection~\cite{hindersin2015most}.
A prime example of an amplifier under the Bd-updating is the Star graph, which improves the mutant fixation probability to roughly $1-1/r^2$~\cite{broom2008analysis,hadjichrysanthou2011evolutionary,monk2014martingales,chalub2014asymptotic}.
In particular, when $r=1+\eps$, this is approximately a two-fold increase over the baseline value $1-1/r$ given by the unstructured population.
Moreover, certain large spatial structures function as so-called superamplifiers, that is, they increase the mutant fixation probability arbitrarily close to 1, even when the mutant has only negligible fitness advantage $r=1+\eps$~\cite{galanis2017amplifiers}.
Many other superamplifiers are known, including  Incubators~\cite{goldberg2019asymptotically}, or Selection Reactors~\cite{tkadlec2021fast}.

In contrast, in the world of dB-updating, only a handful of amplifiers are known~\cite{richter2023spectral}. 
Perhaps the most prominent examples are the Fan graphs (see~\cref{fig:n11}) that increase the fixation probability of near-neutral mutants by a factor of up to $1.5$~\cite{allen2020transient}.
Interestingly, all dB-amplifiers are necessarily transient, meaning that the provided amplification effect disappears when the mutant fitness advantage exceeds a certain threshold~\cite{tkadlec2020limits}.
In particular, large Fan graphs increase the fixation probability of the invading mutants for $r\in(1,\varphi)$, where $\varphi\approx 1.618$ is the golden ratio, but decrease it when $r>\varphi$~\cite{allen2020transient}.

Unfortunately, the Fan graphs do not function as amplifiers when we instead consider them under Bd-updating (see~\cref{fig:n11}).
This is unexpected, since amplification in the Bd-world is so pervasive.
And it begs a question.
Do there exist spatial structures that function as amplifiers both under the Bd-updating and under the dB-updating? That is, do there exist structures for which the amplification effect is robust with respect to the seemingly arbitrary choice of which version of the Moran process we decide to run?

In this work, we first show three negative results that indicate that the requirements for Bd-amplification and dB-amplification are often conflicting.
First, we show that known amplifiers of selection under the Bd-updating are suppressors of selection for the dB-updating and vice versa.
Second, we prove that simultaneous Bd- and dB-amplification is impossible under neutral drift ($r=1$) when the initial mutant location is fixed to a specific starting node.
Third, we define a quantity that corresponds to the probability of ``mutants going extinct immediately''. We then prove that, roughly speaking, no graph improves this quantity as compared to the complete graph under both Bd- and dB-updating. Thus, improving fixation probability under both Bd- and dB-udpating as compared to the complete graph might seem unlikely.
 Despite those negative results, we identify a class of population structures that function as amplifiers of selection under both Birth-death and death-Birth updating, for any mutation that grants a relative fitness advantage $r\in(1,1.2)$.
We also present numerical computation that illustrates that the amplification strength is substantial.


\section*{Model}
Here we formally introduce the terms and notation that we use later, such as the evolutionary dynamics of Moran Birth-death and Moran death-Birth process, the fixation probability, and the notion of an amplifier.

\subsection*{Population structure}
The spatial structure of the population is represented as a graph (network), denoted $G_N=(V,E)$, where $V$ is a set of $N$ nodes (vertices) of $G_N$ that represent individual sites, and $E$ is a set of edges (connections) that represent possible migration patterns for the offspring.
The edges are undirected (two-way) and may be weighted to distinguish stronger interactions from the weaker ones, see~\cref{fig:moran}a.
The weight of an edge between nodes $u$ and $v$ is denoted $\w(u,v)$.
If all edge weights are equal to 1 we say that the graph is \textit{unweighted}.
At any given time, each site is occupied by a single individual, who is either a resident with fitness 1, or a mutant with fitness $r\ge 1$. The fitness of an individual at node $u$ is denoted $\f(u)$.

\subsection*{Moran process}
Moran process is a classic discrete-time stochastic process that models the evolutionary dynamics of selection in a population of asexually reproducing individuals.
Initially, each node is occupied either by a resident or by a mutant.
As long as both mutants and residents co-exist in the population, we perform discrete time steps that change the state of (at most) one node at a time.

There are two versions of the Moran process (see \cref{fig:moran}).
In the \textit{Moran Birth-death process}, we first select an individual to reproduce (randomly, proportionally to the fitness of the individual), and then the offspring migrates along one adjacent edge (randomly, proportionally to the weight of that edge) to replace the neighbor.
Formally, denoting by $F=\sum_u f(u)$ the total fitness of the population, node $u$ gets selected for reproduction with probability $f(u)/F$, and then it replaces a neighbor $v$ with probability $p_{u\to v}=\w(u,v)/\sum_{v'} \w(u,v')$.

In contrast, in the \textit{Moran death-Birth process}, we first select an individual to die (uniformly at random), and then the neighbors compete to fill in the vacancy (randomly, proportionally to the edge weight and the fitness of the neighbor).
Formally, node $v$ dies with probability $1/N$ and it gets replaced by a node $u$ with probability
$p_{u\to v} =f(u)\cdot \w(u,v) / (\sum_{u'}f(u')\cdot \w(u',v))$.
We note that in both versions we capitalize the word ``Birth'' to signify that fitness plays a role in the birth step (and not in the death step).

\subsection*{Fixation probability and Amplifiers}
If the graph $G_N$ that represents the population structure is connected then the Moran process eventually reaches a ``homogeneous state'', where either all nodes are occupied by mutants (we say that mutants \textit{fixated}), or all nodes are occupied by residents (we say that mutants \textit{went extinct}).
Given a graph $G_N$, a mutant fitness advantage $r\ge 1$, and a set $S\subseteq V$ of nodes initially occupied by mutants, we denote by $\fp(G_N,S)$ the \textit{fixation probability}, that is, the probability that mutants eventually reach fixation, under Moran Birth-death process.
We are particularly interested in the fixation probability of a single mutant who appears at a node selected uniformly at random.
We denote this \textit{fixation probability under uniform initialization} by $\fp(G_N)=\frac1N\sum_{v\in V} \fp(G_N,\{v\})$.
We define $\fpd(G_N,S)$ and $\fpd(G_N)$ analogously.

In this work we focus on population structures that increase the fixation probability of invading mutants.
The base case is given by an unweighted complete graph $K_N$ that includes all edges and represents an unstructured, well-mixed population.
It is known~\cite{nowak2006evolutionary,hindersin2015most,kaveh2015duality} that
\[\fp(K_N)=\frac{1-\frac1r}{1-\frac1{r^N}} \quad\text{and}\quad
\fpd(K_N)=\frac{N-1}N\cdot\frac{1-\frac1r}{1-\frac1{r^{N-1}}}.
\]
Given a graph $G_N$ and a mutant fitness advantage $r\ge 1$, we say that $G_N$ is a \textit{Bd$_r$-amplifier} if $\fp(G_N)>\fp(K_N)$.
We define dB$_r$ amplifiers analogously, that is, as those graphs $G_N$ that satisfy $\fpd(G_N)>\fpd(K_N)$.
Similarly, \textit{suppressors} are graphs that decrease the fixation probability as compared to the complete graph.



\section*{Results}
First, we present three negative results that illustrate that the two worlds of Birth-death and death-Birth updating often present contradictory requirements when it comes to enhancing the fixation probability of a single newly occurring mutant.
Nevertheless, as our main contribution in the positive direction, we then present population structures that are both Bd$_r$-amplifiers and dB$_r$-amplifiers for a range of mutant fitness advantages $r\in(1,1.2)$.

\subsection*{Negative results}
In this section, we present results that suggest that finding simultaneous Bd$_r$- and dB$_r$- amplifiers is not easy.
First, we show empirically that known amplifiers for one process are suppressors for the other process.
Second, we show that in the neutral regime ($r=1$), any fixed vertex is a ``good'' starting vertex for the mutant in at most one of the two processes.
Finally, we
show that for any starting vertex, the chance of not dying immediately can be enhanced 
in at most one of the two processes (see below for details).

\subsubsection*{Known amplifiers for one process}\label{subs:known-amplifiers}
In this section we examine spatial structures that are known to amplify under one of the two versions of the Moran process, in order to see whether they amplify under the other version of the Moran process (spoiler alert: they don't).

First, we consider the smallest known unweighted dB-amplifier~\cite{richter2021spectral}, which is a certain graph on $N=11$ nodes (see~\cref{fig:n11}). We call the graph $D_{11}$.
The graph $D_{11}$ is an extremely weak dB$_r$-amplifier in a range of approximately $r\in(1,1.00075)$, where it increases the fixation probability by a factor less than $1.0000001\times$ (see~\cite[Fig.1]{richter2021spectral}). For $r\in(1.01,1.1)$ the graph $D_{11}$ appears to function as a very slight suppressor under both dB-updating and Bd-updating. In particular, at $r=1.1$ we obtain $\fp(D_{11})/\fp(K_{11})\doteq 0.996$ and $\fpd(D_{11})/\fpd(K_{11})\doteq 0.997$.  

Next, we examine the star graph $S_{11}$ on $11$ vertices which, to our knowledge, is the strongest unweighted amplifier for Bd-updating at this population size.
The Star graph is a clear Bd$_r$-amplifier for $r\in(1.01,1.1)$, but an equally clear dB$_r$-suppressor in that range.

The situation is reversed for the Fan graph $F_{11}$~\cite{allen2020transient}. While the Fan graph clearly functions as an amplifier under the dB-updating when $r\in(1.01,1.1)$, it lags behind the baseline given by the complete graph under the Bd-updating.





\begin{figure}[h]
  \centering
   \includegraphics[width=\linewidth]{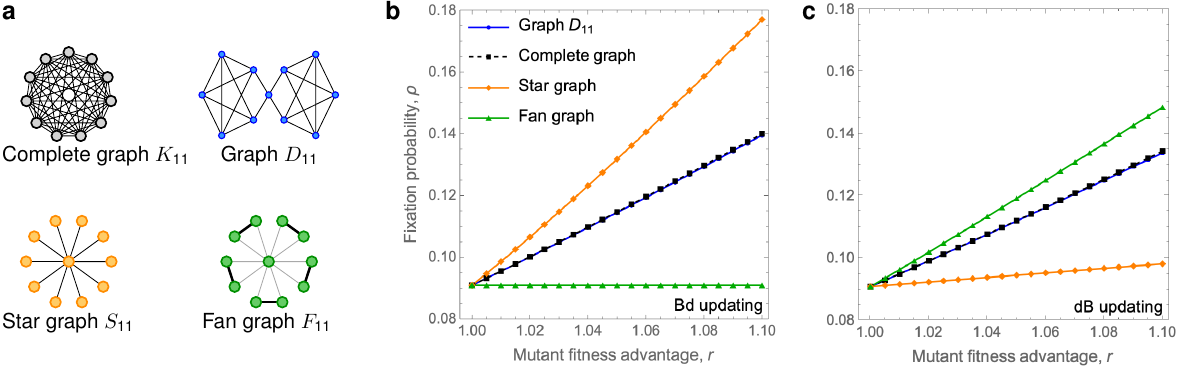}
\caption{\textbf{Known amplifiers are suppressors for the other process.}
\textbf{a,} We consider four graphs on $N=11$ nodes, namely the Complete graph $K_{11}$, the star graph $S_{11}$, the Fan graph $F_{11}$, and the smallest known undirected suppressor $D_{11}$ (see~\cite{richter2021spectral}).
\textbf{b,} Under Bd-updating, the only amplifier for $r\in\{1.01,\dots,1.1\}$ is the Star graph $S_{11}$.
\textbf{c,} Under dB-updating, the only amplifier for $r\in\{1.01,\dots,1.1\}$ is the Fan graph $F_{11}$.
Values computed by numerically solving the underlying Markov chains.
}
\label{fig:n11}
\end{figure}

\subsubsection*{Neutral regime ($r=1$)}
The second negative result pertains to the case of neutral mutations ($r=1$).
Recall that $\fp(G_N,v)$ and $\fpd(G_N,v)$ denote the fixation probabilities when the initial mutant appears at node $v$. 
The following theorem states that for neutral mutations ($r=1$), no initial mutant node increases the fixation probability both for Birth-death and death-Birth updating.

\begin{theorem}\label{thm:fixed-vertex}
    Let $G_N$ be a graph and $v$ an initial mutant node. Then either
    \begin{enumerate}
        \item $\fpn(G_N,v)< \fpn(K_N)$; or
        \item $\fpdn(G_N,v)< \fpdn(K_N)$; or 
        \item $\fpn(G_N,v)= \fpn(K_N)$ and $\fpdn(G_N,v)= \fpdn(K_N)$.
    \end{enumerate}
    
\end{theorem}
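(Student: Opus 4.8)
The plan is to reduce the whole statement to the arithmetic-mean--harmonic-mean (AM--HM) inequality, after first writing down closed-form expressions for the two neutral single-vertex fixation probabilities. The starting observation is that at $r=1$ both baselines collapse to the same number: taking $r\to 1$ in the quoted formulas for $\fp(K_N)$ and $\fpd(K_N)$ gives $\fpn(K_N)=\fpdn(K_N)=1/N$. So the three alternatives are really statements about whether $\fpn(G_N,v)$ and $\fpdn(G_N,v)$ exceed, equal, or fall below $1/N$, and I only need to rule out the possibility that both strictly exceed $1/N$ (and pin down the joint-equality case).

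To get the closed forms I would use the martingale / optional-stopping method. Write a state as $x\in\{0,1\}^V$ (the mutant indicator) and look for weights making $\phi(x)=\sum_i \alpha_i x_i$ a martingale under one neutral step. For death-Birth, a dying node $j$ (chosen uniformly) is refilled from a neighbor $i$ with probability $\w(i,j)/d_j$, where $d_j=\sum_u \w(u,j)$ is the weighted degree; computing $\E[\phi(x')-\phi(x)\mid x]$ shows the martingale condition is exactly $P^{\top}\alpha=\alpha$, i.e.\ $\alpha$ is the stationary distribution of the weighted random walk, $\alpha_i\propto d_i$. For Birth-death, a reproducing node $i$ (chosen uniformly at neutrality) sends its offspring to neighbor $j$ with probability $\w(i,j)/d_i$; the analogous computation gives the condition $(P\alpha)_i=T_i\alpha_i$ with $T_i=\sum_j \w(i,j)/d_j$ the ``temperature'', and one checks directly, using the symmetry $\w(i,j)=\w(j,i)$, that $\alpha_i\propto 1/d_i$ solves it. Since on a connected graph the process is absorbed almost surely at the all-mutant or all-resident state and $\phi$ is bounded, optional stopping applied to a single initial mutant at $v$ yields
\begin{equation*}
\fpdn(G_N,v)=\frac{d_v}{\sum_i d_i}\qquad\text{and}\qquad \fpn(G_N,v)=\frac{1/d_v}{\sum_i 1/d_i}.
\end{equation*}

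With these formulas the finish is short, and it reveals the complementarity behind the theorem: death-Birth rewards starting on a high-degree node while Birth-death rewards starting on a low-degree node. Indeed $\fpdn(G_N,v)\ge 1/N$ is equivalent to $d_v\ge \bar d$, where $\bar d=\frac1N\sum_i d_i$ is the arithmetic mean of the degrees, and $\fpn(G_N,v)\ge 1/N$ is equivalent to $d_v\le H$, where $H=(\frac1N\sum_i 1/d_i)^{-1}$ is their harmonic mean. The AM--HM inequality gives $H\le\bar d$, with equality iff all degrees coincide. Now suppose neither alternative (1) nor (2) holds, i.e.\ $\fpn(G_N,v)\ge 1/N$ and $\fpdn(G_N,v)\ge 1/N$; then $\bar d\le d_v\le H\le\bar d$, forcing $d_v=H=\bar d$, so all weighted degrees are equal and both fixation probabilities equal $1/N$ exactly, which is alternative (3). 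Hence at least one of (1), (2), (3) always holds.

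I expect the only real obstacle to be guessing and verifying the Birth-death martingale weights $\alpha_i\propto 1/d_i$: the death-Birth weights $d_i$ are just the familiar stationary distribution, but the reciprocal-degree weights for Birth-death are the less obvious ``dual'', and the verification leans on the edge-weight symmetry. Everything downstream---the optional-stopping step (which needs only boundedness and almost-sure absorption on a connected graph) and the one-line AM--HM argument---is routine. A minor point to state carefully is that the $r\to1$ limits of the quoted $K_N$ formulas are indeed $1/N$, so that all comparisons are against the single threshold $1/N$.
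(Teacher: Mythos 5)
Your proof is correct and takes essentially the same route as the paper: the same closed-form neutral fixation probabilities $\fpn(G_N,v)=\frac{1/\deg(v)}{\sum_u 1/\deg(u)}$ and $\fpdn(G_N,v)=\frac{\deg(v)}{\sum_u \deg(u)}$ (the paper's Lemma~1, likewise established by a martingale-type balance argument), combined with the AM--HM inequality on the weighted degrees. The only cosmetic difference is the finish: the paper multiplies the two probabilities and bounds the resulting ($v$-independent) product by $1/N^2$, whereas you trap $\deg(v)$ between the harmonic and arithmetic means of the degrees --- two equivalent packagings of the same inequality, with the same equality analysis.
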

The idea behind the proof is that for neutral evolution there are explicit formulas for fixation probabilities $\fp(G_N,v)$ and $\fpd(G_N,v)$ on any undirected graph $G_N$~\cite{broom2010two,maciejewski2014reproductive}.
The result then follows by applying Cauchy-Schwarz inequality.
See \SI{} for details.
In \SI{}, we also note that \cref{thm:fixed-vertex} does not generalize to the case when instead of having one initial mutant node we start with an initial subset $S$ of $k\ge 2$ nodes occupied by mutants.

\subsubsection*{Immediate extinction and  forward bias}
In order to present our third and final negative result, we need to introduce additional notions and notation.
When tracking the evolutionary dynamics on a given graph $G_N$ with a given mutant fitness advantage $r\ge 1$, it is often useful to disregard the exact configuration of which nodes are currently occupied by mutants, and only look at \textit{how many} nodes are occupied by mutants.

One example of this is the celebrated Isothermal theorem~\cite{lieberman2005evolutionary} which states that once $N$ and $r$ are fixed, the fixation probability under the Moran Birth-death process on any regular graph is the same.
Here, a graph is \textit{regular} if each node has the same total weight of adjacent edges. Examples of regular graphs include the complete graph, the cycle graph, or any grid graph with periodic boundary condition.

The intuition behind the proof of the Isothermal theorem is that for any regular graph $R_N$, the Moran Birth-death process can be mapped to a random walk that tracks just the number of mutants, instead of their exact positions on the graph.
It can be shown that this random walk has a constant forward bias, that is, the probabilities $p^+$ (resp. $p^-$) that the size of the mutant subpopulation increases (resp. decreases) satisfy $p^+/p^-=r$, for any number of mutants in any particular mutant-resident configuration.
A natural approach to construct amplifiers is thus to construct graphs for which this forward bias satisfies an inequality $p^+/p^-\ge r$ for the Moran Birth-death process and an analogous inequality for the Moran death-Birth process.
Our final negative result shows that this goal can not be achieved already in the first step.

Formally, consider the Moran Birth-death process on a graph $G_N$ with a single initial mutant placed at node $u$.
Let $\gam(G_N,u)$ be the probability that the first reproduction event that changes the size of the mutant subpopulation is the initial mutant reproducing (as opposed to the initial mutant being replaced by one of its neighbors).
In other words, $\gam(G_N,u)$ is the probability that the first step that changes the configuration of the mutants does \textit{not} eliminate the initial mutant, leaving the options of later mutant extinction or mutant fixation.

For the complete graph $K_N$ (and any single mutant node) it is not hard to show that $\gam(K_N)=\gam(K_N,u)=r/(r+1)$ for any node $u$. Moreover, by a slight extension of the Isothermal theorem, we have $\gam(R_N,u)=r/(r+1)$ for any regular graph $R_N$ and any node $u$.
For Moran death-Birth process, we define $\gamd(G_N,u)$ and $\gamd(K_N)$ analogously.
To construct a graph that is both a Bd- and a dB-amplifier, a natural approach is to look for a graph and an initial mutant node $u$ such that $\gam(G_N,u)>\gam(K_N)$ and $\gamd(G_N,u)>\gamd(K_N)$.
However, the following theorem states that no such graphs exist.

\begin{theorem}\label{thm:first-step}
    Let $G_N$ be a graph, $u$ an initial mutant node, and $r\ge 1$.
    Then either
    \begin{enumerate}
        \item  $\gam(G_N,u)< \gam(K_N)$; or
        \item  $\gamd(G_N,u)< \gamd(K_N)$; or
        \item $\gam(G_N,u)= \gam(K_N)$ and $\gamd(G_N,u)= \gamd(K_N)$.
    \end{enumerate}
\end{theorem}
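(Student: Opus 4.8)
The plan is to reduce both quantities to explicit one-step formulas and then compare them through a single concave function. First I would write down the first size-changing step for each process when a single mutant sits at $u$. Under Birth-death, with total fitness $F=r+(N-1)$, the mutant subpopulation grows only when $u$ reproduces (probability $r/F$, since every neighbor is a resident) and shrinks only when a resident neighbor $w$ of $u$ reproduces and sends its offspring onto $u$ (probability $\frac1F\sum_{w\sim u}\w(u,w)/W_w$, writing $W_w$ for the total edge weight at $w$). Normalizing by the total rate of size-changing events gives
\[
\gam(G_N,u)=\frac{r}{r+a},\qquad a:=\sum_{w\sim u}\frac{\w(u,w)}{W_w}.
\]
Under death-Birth, the size shrinks exactly when $u$ dies (probability $1/N$, after which a resident necessarily replaces it) and grows when a resident neighbor $v$ dies and $u$ wins the replacement competition; computing that competition probability yields
\[
\gamd(G_N,u)=\frac{b}{b+1},\qquad b:=\sum_{v\sim u}\frac{r\,\w(u,v)}{W_v+(r-1)\w(u,v)}.
\]

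Next I would set $x_v:=\w(u,v)/W_v\in(0,1]$ for each neighbor $v$ of $u$ and introduce $g(x):=rx/(1+(r-1)x)$ on $[0,1]$, so that a direct substitution gives the clean identities $a=\sum_v x_v$ and $b=\sum_v g(x_v)$. On the complete graph every $x_v$ equals $1/(N-1)$ and $u$ has $N-1$ neighbors, so $a_{K_N}=1$ and $b_{K_N}=(N-1)\,g(1/(N-1))$. Since $\gam$ is strictly decreasing in $a$ and $\gamd$ is strictly increasing in $b$, the theorem becomes equivalent to the statement that one can never have both $a<1$ and $b>b_{K_N}$, together with the matching characterization of the equality case.

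The crux is therefore to bound $b=\sum_v g(x_v)$ from above in terms of $a=\sum_v x_v$. Here I would use that $g$ is increasing, concave on $[0,1]$ (one checks $g''<0$ for $r>1$, and $g$ is linear for $r=1$), and satisfies $g(0)=0$. Because $u$ has at most $N-1$ neighbors and $g(0)=0$, I would pad the list $(x_v)$ with zeros up to exactly $N-1$ entries, which changes neither $a$ nor $b$, and then apply Jensen's inequality to the concave $g$ over these $N-1$ entries to obtain
\[
b\le (N-1)\,g\!\left(\frac{a}{N-1}\right).
\]
Monotonicity of $g$ then shows that $a<1$ forces $b<(N-1)\,g(1/(N-1))=b_{K_N}$, i.e. $\gam(G_N,u)>\gam(K_N)$ implies $\gamd(G_N,u)<\gamd(K_N)$, ruling out simultaneous improvement. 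The trichotomy follows by cases on $a$: if $a>1$ then $\gam(G_N,u)<\gam(K_N)$ (case 1); if $a<1$ then $\gamd(G_N,u)<\gamd(K_N)$ (case 2); and if $a=1$ the same Jensen bound gives $b\le b_{K_N}$, with equality precisely when all padded $x_v$ coincide, i.e. when $u$ has exactly $N-1$ neighbors each with $x_v=1/(N-1)$, the only configuration yielding the equality case 3.

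I expect the main obstacle to be making the comparison rigorous across graphs where $u$ has a different degree than in $K_N$: the zero-padding trick, which relies on $g(0)=0$, is exactly what lets a single application of concavity absorb this degree mismatch and funnel everything into one Jensen step. I would also treat $r=1$ with care, since there $g$ is linear, Jensen holds with equality for every configuration, the two processes move in exactly opposite directions, and the equality case 3 simplifies to the single condition $a=1$.
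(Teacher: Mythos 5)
Your proposal is correct and follows essentially the same route as the paper: both reduce the statement to one-step formulas in terms of the temperature $T(u)=\sum_{v\sim u}\w(u,v)/\deg(v)$ (your $a$), observe that the Bd forward bias is $r/(r+T(u))$, and then apply Jensen's inequality to the concave increasing function $x\mapsto rx/(1+(r-1)x)$ (the paper's $f$ scaled by $r$) to show that $T(u)\le 1$ forces the dB forward bias to be at most that of $K_N$, with the same handling of the equality case. The only cosmetic difference is that you absorb the degree mismatch by zero-padding the list $(x_v)$ to exactly $N-1$ entries before one application of Jensen, whereas the paper applies Jensen over the actual neighbors and then finishes with two monotonicity steps (in $T(u)\le 1$ and in $|N(u)|\le N-1$).
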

The proof relies on the notion of the temperature of a node.
Formally, given a graph $G_N=(V,E)$ and a node $u\in V$, its \textit{temperature} $T(u)$ is defined as $T(u)=\sum_{v} 1/\deg(v)$, where the sum goes over all the neighboring nodes $v$ of $u$ in $G_N$.
The temperature of a node represents the rate at which the node is being replaced by its neighbors in the Moran Birth-death process when $r=1$.
Nodes with high temperature are replaced often, whereas nodes with low temperature are replaced less frequently.
Building on this, it is straightforward to show that if a node $u$ has above-average temperature, then $\gam(G_N,u)< \gam(K_N)$, that is, in Moran Birth-death process with a single mutant at $u$ the forward bias is lower than the forward bias on a complete graph.
To complete the proof, we then show that for any node $u$ with below-average temperature, we have $\gamd(G_N,u)< \gamd(K_N)$.
Our proof of the latter claim uses Jensen's inequality for a certain concave function.
See \SI{} for details.

\subsection*{Positive result}
Despite the above negative results, in this section we identify population structures $A_N$ that substantially amplify the fixation probability under both Birth-death updating and death-Birth updating when the number $N$ of nodes is sufficiently large.

The structures $A_N$ are composed of two large chunks $\Abd$ and $\Adb$ that are connected by a single edge, see~\cref{fig:an}a for an illustration.
The chunk $\Adb$ is a Fan graph~\cite{allen2020transient}, which is to our knowledge the strongest currently known dB-amplifier.
The chunk $\Abd$ could be any of the many strong Bd-amplifiers. For definiteness, in~\cref{fig:an}a we use a Fan-like structure with $a$ nodes in a central hub and $b$ blades of two nodes each surrounding it.
The single connecting edge has a very low edge weight so that the two chunks interact only rarely.
For population size $N=1001$, the resulting weighted graph is both a Bd$_r$-amplifier and a dB$_r$-amplifier for any $r\in (1,1.09)$, see~\cref{fig:an}b.

\begin{figure}[h]
    \centering
    \includegraphics{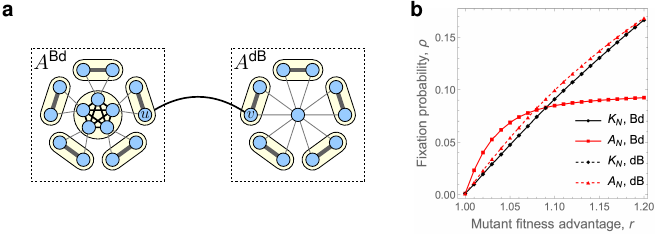}
    \caption{\textbf{Simultaneous Bd- and dB-amplifier $A_{N}$.}
    \textbf{a,} The graph $A_N$ is composed of two large chunks $\Abd$ and $\Adb$ that are connected by a single edge. The chunk $\Adb$ is a Fan graph on $f$ nodes. 
    The chunk $\Abd$ is a fan-like graph with $a$ vertices in a central hub and $b$ blades of two nodes each. The total population size is $N=a+2b+f$ (here $a=b=5$, $f=11$, and $N=26$). The edge weights are defined such that different circled units within the chunks interact only rarely, and the chunks themselves interact even more rarely.
    \textbf{b,} Here we consider graph $A_N$ with population size $N=1001$ and $(a,b,f)=(30,85,801)$. The fixation probabilities under Bd- and dB-updating are computed by numerically solving the underlying Markov chain. We find that the inequality $\fp(A_{N})>\fp(K_{N})$ is satisfied for $r\in (1,1.09)$ and the inequality $\fpd(A_{N})>\fpd(K_{N})$ is satisfied for $r\in(1,1.2)$.
    In particular, at $r=1.05$ the ratios satisfy $\fp(A_N)/\fp(K_N)>1.44$ and $\fpd(A_N)/\fpd(K_N)>1.14$.
    }
    \label{fig:an}
\end{figure}

Similarly, we identify large population structures that serve as both Bd$_r$-amplifiers and dB$_r$-amplifiers for any $r\in(1,1.2)$.

\begin{theorem}[Simultaneous Bd- and dB-amplifier]
\label{thm:positive}
    For every large enough population size $N$ there exists a graph $A_N$ such that for all $r\in(1,1.2)$ we have
    \[
    \fp(A_N)>\fp(K_N)
    \quad\text{and}\quad 
    \fpd(A_N)>\fpd(K_N).
    \]
\end{theorem}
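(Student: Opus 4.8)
The plan is to build $A_N$ from two large chunks joined by a single edge of very small weight $\eps$, exactly as in the construction sketched above: a strong Bd-amplifier chunk $\Abd$ on $\alpha N$ nodes and a Fan-graph chunk $\Adb$ on $(1-\alpha)N$ nodes, where the proportion $\alpha\in(0,1)$ is to be fixed later. The whole argument rests on a \emph{separation of timescales}: since the connecting edge has weight $O(\eps)$, any event that crosses between the chunks occurs at rate $O(\eps)$, whereas the dynamics inside a chunk proceed at rate $O(1)$. Letting $\eps\to0$ therefore collapses the process onto a four-state ``meta'' Markov chain whose states record only which chunks are homogeneous: $00$ (all residents, extinction), $10$ ($\Abd$ all mutants), $01$ ($\Adb$ all mutants), and $11$ (fixation). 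First I would make this reduction precise and show that $\fp(A_N)$ and $\fpd(A_N)$ converge, as $\eps\to0$, to the absorption probabilities of this meta-chain.

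Second, I would compute those absorption probabilities. A single mutant placed uniformly lands in $\Abd$ with probability $\alpha$ and in $\Adb$ with probability $1-\alpha$; on the fast timescale it then either dies out or fills its home chunk, reaching state $10$ or $01$ with probability equal to the standalone uniform-initialization fixation probability of that chunk. From a metastable state the only decisive events are a mutant crossing the bridge and fixating the opposite resident chunk (leading to $11$) or a resident crossing and taking over the mutant chunk (leading to $00$); balancing the rates of these two competing events gives, for each home chunk $X$, a factor $\psi_X\in(0,1)$ equal to the probability of reaching fixation rather than extinction. This yields the clean limiting identity $\lim_{\eps\to0}\fp(A_N)=\alpha\,\fp(\Abd)\,\psi_{\Abd}+(1-\alpha)\,\fp(\Adb)\,\psi_{\Adb}$, together with the analogous identity for $\fpd$, where $\fp(\Abd)$ and $\fp(\Adb)$ are the standalone fixation probabilities of the two chunks.

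The key observation that makes the construction work is that $\psi_X\to1$ as the chunks grow. The ``backward'' decisive event requires a single resident --- which, relative to the surrounding mutants of fitness $r$, is a \emph{disadvantaged} invader of relative fitness $1/r<1$ --- to take over a chunk of size $\Theta(N)$; on the complete graph this probability equals $\tfrac{r-1}{r^m-1}$, exponentially small in the chunk size $m$, and on a Bd-amplifier it is smaller still. The ``forward'' event, by contrast, is an advantageous mutant fixating a resident chunk, whose probability is bounded below by a positive constant even when that chunk is a mild suppressor. Hence the forward rate dominates and $\psi_X\to1$, so in the joint limit $\fp(A_N)\to\alpha\,\fp(\Abd)+(1-\alpha)\,\fp(\Adb)$ and $\fpd(A_N)\to\alpha\,\fpd(\Abd)+(1-\alpha)\,\fpd(\Adb)$: a plain convex combination of the two standalone fixation probabilities. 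Since $\fp(K_N),\fpd(K_N)\to1-\tfrac1r$, it then suffices to choose $\Abd$, $\Adb$ and $\alpha$ so that both convex combinations strictly exceed $1-\tfrac1r$ for every $r\in(1,1.2)$: under Bd-updating the strong amplification of $\Abd$ must outweigh the mild Bd-suppression of the Fan, and under dB-updating the strong amplification of the Fan must outweigh the dB-suppression of $\Abd$. Writing each standalone probability as $c(r)\,(1-\tfrac1r)$, this reduces to the two linear-in-$\alpha$ inequalities $\alpha\,c^{\mathrm{Bd}}_{\Abd}+(1-\alpha)\,c^{\mathrm{Bd}}_{\Adb}>1$ and $\alpha\,c^{\mathrm{dB}}_{\Abd}+(1-\alpha)\,c^{\mathrm{dB}}_{\Adb}>1$, which admit a common solution $\alpha$ precisely when the amplifications are strong enough and the suppressions mild enough.

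The hard part will be the quantitative inputs to this last step. I would need bounds, uniform in $r\in(1,1.2)$, on four families of standalone quantities: the Bd- and dB-fixation probabilities of both chunks, and --- crucially for $\psi_X\to1$ --- the probabilities that a disadvantaged invader takes over each chunk. The most delicate of these is bounding the disadvantaged-invader takeover probability on the \emph{Fan} graph under Bd-updating: because the Fan is a Bd-suppressor it dampens selection and could in principle let the disadvantaged resident fixate with non-negligible probability, which would keep $\psi_{\Adb}$ bounded away from $1$. I would control this by exploiting the hierarchical weak coupling inside each chunk (the ``units interact rarely'' structure), which lets one compute the standalone fixation and reverse-invasion probabilities by a further separation-of-timescales argument and confirm both that the Fan is only a mild Bd-suppressor and that its reverse-invasion probability still vanishes with size. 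Once these bounds certify that the two linear inequalities share a common $\alpha$ throughout $r\in(1,1.2)$, the strictness of the limiting inequalities lets me fix such an $\alpha$, take the chunks large, and then take $\eps$ small enough that both $\fp(A_N)>\fp(K_N)$ and $\fpd(A_N)>\fpd(K_N)$ persist at finite $N$, completing the proof.
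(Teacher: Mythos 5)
Your high-level architecture (two weakly coupled chunks joined by one light edge, separation of timescales, a retry/meta-chain analysis once one chunk is homogeneous, and a final balancing of the chunk sizes) is the same as the paper's, and your closing balancing step essentially reproduces the paper's choice of the proportion $\gamma$. However, there is a genuine gap at the heart of your meta-chain step: you never establish where the directional bias of the bridge crossings comes from, and the fixation-probability asymmetries you invoke in its place are not supported by the results you would be citing. In your construction both chunks keep internal edge weights of order $1$ and the bridge has weight $\eps$, so the crossing rates in the two directions are comparable (up to factors of $r$ and the endpoint degrees), and $\psi_X\to 1$ must come entirely from ``forward fixation $\gg$ backward fixation.'' Both of your quantitative claims there are unjustified. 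First, the forward probability --- an advantageous mutant placed at one \emph{specific} vertex of a weighted graph fixating there --- is not ``bounded below by a positive constant'' in general; without analyzing the chunk's internals, the only available bound is the best-vertex bound $\ge 1/N$ (the paper's Lemma on good starting vertices). Second, the backward probability on the Bd-chunk is not ``exponentially small, and on a Bd-amplifier smaller still'': being a Bd$_r$-amplifier is a statement about advantageous mutants under uniform initialization, and from the cited Incubator guarantee $\fp \ge 1-\calO(N^{-1/3})$ the most one can extract (via monotonicity in the initial set and complementation) is a backward bound of $\calO(N^{-1/3})$. So under Bd updating, in the metastable state where $\Abd$ is mutant and the Fan is resident, your available bounds give forward $\ge 1/N$ versus backward $\le \calO(N^{-1/3})$: the domination you need goes the wrong way, and $\psi_{\Abd}\to 1$ does not follow. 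You flag the analogous problem for the reverse invasion of the Fan under dB, but your proposed fix (a further timescale analysis inside the chunks) is new structural work, not a completion of the argument from known results.

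The paper avoids this entire issue with one structural trick that your construction is missing: the internal edge weights of the two chunks are scaled \emph{asymmetrically} (the Bd-chunk's edges are scaled down by $\eps/N^3$ while the Fan's stay of order $1$, and the bridge gets weight $\eps^3/N^9$). With this scaling the crossing event itself is directional: under Bd-updating a bridge reproduction goes from $\Abd$ to $\Adb$ with conditional probability $\ge 1-r^2/N^2$, while under dB-updating it goes from $\Adb$ to $\Abd$ with conditional probability $\ge 1-r^2/N^2$ (this is the ``flow reversal'' highlighted in the main text). The backward invasion is thus suppressed at the level of the crossing (probability $\calO(1/N^2)$ per bridge use), so the paper never needs any bound on disadvantaged-invader fixation probabilities, and the weak $1/N$ forward bound suffices: per attempt, success $\ge 1/N - 1/N^2$ versus failure $\calO(1/N^2)$, giving conditional fixation $1-\calO(1/N)$ by the retry argument. (The paper also replaces your singular $\eps\to 0$ limit by explicit polynomial fixation-time bounds plus union bounds, which is what makes the ``chunks resolve between interactions'' step rigorous.) If you add this asymmetric scaling, your meta-chain computation goes through using only the cited ingredients; without it, you would need to prove two new facts about the chunks --- an $\Omega(1)$ lower bound on the Fan's Bd-fixation probability from some vertex, and exponential suppression of disadvantaged invaders in both chunks --- neither of which follows from the results you rely on.
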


In what follows we provide intuition about the proof of~\cref{thm:positive}.The fully rigorous proof is relegated to \SI{}.
Let $e$ be the edge connecting the two chunks, $u$ its endpoint in $\Abd$, and $v$ its endpoint in $\Adb$.

First, observe that since $e$ has a low weight, the two chunks evolve mostly independently.
This means that, with high probability, each chunk resolves to a homogeneous state in between any two interactions across the chunks. 
In particular, if the initial mutant appears in the chunk where it is favored (e.g. if it appears in the chunk $\Abd$ when Bd-updating is run), the mutants fixate on that chunk with reasonable probability. If that occurs, we say that mutants are ``half done''.

Once the mutants are half done, the next relevant step occurs when the two chunks interact. There are two cases. Either a mutant at $u$ reproduces and the offspring migrates along $e$ to $v$, or a resident at $v$ reproduces and the offspring migrates along $e$ to replace the mutant at $u$. In both cases, the individual (mutant or resident) who ``invades'' the other half eventually either succeeds in spreading through that half, or they fail at doing that. If the latter occurs, we are back at the situation in which mutants are half done and the situation repeats. By bounding all the relevant probabilities, we show that once half done, mutants are overwhelmingly likely to fixate, as opposed to going extinct.

We highlight an interesting phenomenon that occurs in our proof. As we run the evolutionary dynamics, we can look at the flow along the connecting edge $e$. Thanks to the edge weights, it turns out that the direction of the flow along $e$ flips depending on whether we run the Moran Birth-death process or the Moran death-Birth process. In particular, under the Bd-updating the edge $e$ is used mostly in the direction from $u$ to $v$. That is, many individuals migrate from $u$ to $v$, whereas few individuals migrate from $v$ to $u$. Under dB-updating the situation reverses. That is, many individuals migrate from $v$ to $u$, whereas few of them migrate from $u$ to $v$. Thus, under the Bd-updating the $\Abd$ chunk is effectively upstream of the chunk $\Adb$, whereas under the dB-updating the $\Adb$ chunk is effectively upstream of the chunk $\Abd$.
This asymmetry is a key factor that  contributes to the fact that once the mutants are half done, they are likely to fixate on the whole graph (see~\cref{fig:an2}).

\begin{figure}[h]
    \centering
    \includegraphics{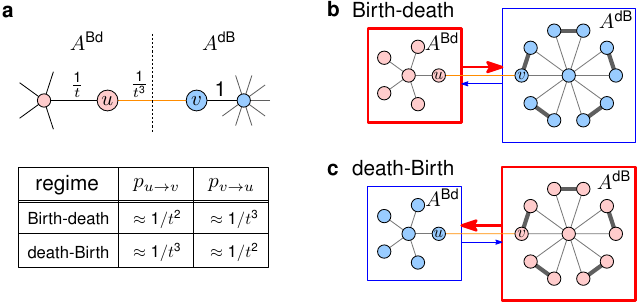}
    \caption{\textbf{Interactions between $\Abd$ and $\Adb$.}
    \textbf{a,}
    The edge weights in the chunks $\Abd$ (red) and $\Adb$ (blue) are shown as a function of $t$ (here $t\gg 1$ is large).
    For each of two versions of the Moran process, the rates at which the offspring migrate from $u$ to $v$ and from $v$ to $u$ can be calculated and are listed in the table.
    \textbf{b,} Under Birth-death updating, the migration rate $p_{u\to v}$ from $u$ to $v$ is roughly $t\times$ larger than the migration rate $p_{v\to u}$ from $v$ to $u$, so the chunk $\Abd$ is upstream of the chunk $\Adb$, and a mutant who has fixated over $\Adb$ is likely to fixate over $\Adb$ too.
    \textbf{c,} In contrast, under death-Birth updating we have $p_{v\to u}\approx t\cdot p_{u\to v}$, hence the chunk $\Adb$ is upstream of $\Abd$.
}
    \label{fig:an2}
\end{figure}



What remains in the proof is to balance out the sizes of the two chunks.
For small $r>1$, the strongest known dB-amplifiers are roughly $\frac32\times$ stronger than the Complete graph (in terms of the fixation probability). Thus, in order to achieve amplification under dB-updating, we need the chunk $\Adb$ to take up at least $2/3$ of the total population size.
The chunk $\Abd$ then takes up at most $1/3$ of the total population size.
In order to achieve Bd-amplification, fixation probability on $\Abd$ under Bd-updating must therefore be at least $3\times$ larger than that on the Complete graph. Interestingly, a Star graph is not strong enough to do that (for $r\approx 1$ and large population size $N$ it is only roughly $2\times$ stronger than the Complete graph), but sufficiently strong Bd-amplifiers do exist (e.g.\ any superamplifier).



\section*{Discussion}
Population structure has a profound impact on the outcomes of evolutionary processes and, in particular, on the probability that a novel mutation achieves fixation~\cite{durrett1994importance,lieberman2005evolutionary}.
Population structures that increase the fixation probability of beneficial mutants, when compared to the case of a well-mixed population, are known as amplifiers of selection.

Somewhat surprisingly, to tell whether a specific spatial structure is an amplifier or not, one needs to specify seemingly minor details of the evolutionary dynamics.
The well-studied Moran process comes in two versions, namely Moran process with Birth-death updating and Moran process with death-Birth updating.
While many spatial structures are amplifiers under the Bd-updating~\cite{hindersin2015most}, only a handful of amplifiers under the dB-updating are known~\cite{richter2023spectral}. Moreover, none of the dB-amplifiers that we checked amplify under the Bd-updating.

In this work we help explain this phenomenon by proving mathematical results which illustrate that the two objectives of amplifying under the Bd-updating and amplifying under the dB-updating are often contradictory.
Thus, one might be tempted to conclude that perhaps there are no population structures that amplify in both worlds, that is, regardless of the choice of the underlying dynamics (Bd or dB).
Nevertheless, we proceed to identify population structures that serve as amplifiers of selection under both Bd-updating and dB-updating.

The amplifiers we identify in this work have several interesting features. 
First, they are robust in the sense that they amplify selection under both the Bd-updating and the dB-updating.
Second, they provide amplification for any mutant fitness advantage $r$ in a range $r\in(1,1.2)$, which covers many realistic values of the mutant fitness advantage, and the amplification is non-negligible (for instance, for $r=1.05$ the fixation probability increases by 14\% and 44\%, respectively. see~\cref{fig:an}).
 Third, the amplifiers are modular. That is, they consist of two large chunks that serve as building blocks and that interact rarely. For definiteness, in this work we specified the two chunks and their relative sizes, but each chunk can be replaced by an alternative building block and the relative sizes can be altered. For example, the best currently known dB-amplifiers amplify by a factor of $1.5\times$ for $r\approx 1$ and continue to amplify for $r$ in a range $r\in(1,\varphi)$, where $\varphi=\frac12(\sqrt5+1)\approx 1.618$ is the golden ratio~\cite{allen2020transient}.
If better dB-amplifiers are found, they can be used as a building block in place of one of the chunks to improve the range $r\in(1,1.2)$ for which the resulting structure amplifies in both worlds.

In this work, our objective was to increase the fixation probability of an invading mutant in both worlds (Bd-updating and dB-updating). An interesting direction for future work is to optimize other quantities in both worlds.

One such quantity is the duration of the process until fixation occurs~\cite{diaz2016absorption,monk2020wald,monk2021martingales}. For example, achieving short fixation times in combination with increasing the fixation probability does not appear to be easy.
Our proofs rely on the existence of small edge weights to separate the time scales at which different stages of the process happen. 
While using more uniform edge weights might still lead to the same outcome, the proofs would need to become more delicate.
A possible approach to identify structures that serve as fast amplifiers in both worlds would be to find unweighted amplifiers, because then the time would be guaranteed to be at most polynomial~\cite{diaz2014approximating,durocher2022invasion}. The first step in this direction would be to identify large and substantially strong unweighted dB-amplifiers. 
There are promising recent results in this direction~\cite{richter2023spectral}. 

Looking beyond fixation time, there are other relevant quantities such as the recently introduced rate at which beneficial mutations accumulate~\cite{sharma2022suppressors}.
Existing research suggests that the two versions of the Moran process behave quite differently in terms of the fixation probability~\cite{hindersin2015most}, but quite similarly in terms of the fixation time~\cite{diaz2014approximating,durocher2022invasion}. Which of those two cases occurs for other relevant quantities remains to be seen.

\section*{Data and code availability}

Code for the figures and the computational experiments
is available from the Figshare repository: \url{https://figshare.com/s/4e08d78c892749f84201}.

\section*{Acknowledgements}
We thank Gavin Rees for helpful discussions.
J.T. was supported by Center for Foundations of Modern Computer Science
(Charles Univ.\ project UNCE/SCI/004) and by the project PRIMUS/24/SCI/012 from Charles University.
J.S., K.C., and S.J. were supported by the European Research Council (ERC) CoG 863818 (ForM-SMArt).

\bibliography{sources}

\newpage
\begin{flushleft}
{\Large
\textbf\newline{Supplementary Material: Amplifiers of selection for the Moran process with both Birth-death and death-Birth updating}
}
\newline
\end{flushleft}
This is a supplementary information to the manuscript Amplifiers of selection for the Moran process with both Birth-death and death-Birth updating.
It contains formal proofs of the theorems listed in the main text.

\section{Preliminaries}
Given an undirected graph $G_N=(V,E)$ on $N$ nodes, the \textit{degree} of a node $u$, denoted $\deg(u)$, is the number of neighbors of $u$ in $G_N$.
When the edges are weighted, we define the degree $\deg(u) = \sum_{(u,v)\in E} \w(u,v)$ as the sum of the weights of all the adjacent edges.
As a direct extension of~\cite{broom2010two,maciejewski2014reproductive} we obtain the following formula for fixation probability under neutral drift ($r=1$). For completeness, we include a proof.

\begin{lemma}[Fixation probability on edge-weighted undirected graphs when $r=1$]\label{lem-neutral}
    Let $G_N=(V,E)$ be an edge-weighted undirected graph on $N$ nodes and $S\subset V$ any set of vertices occupied by mutants. Then
    \[
    \fpn(G_N,S)=\frac{\sum_{u\in S} 1/\deg(u)}{\sum_{v\in V} 1/\deg(v)}
    \qquad\text{and}\qquad
    \fpdn(G_N,S)=\frac{\sum_{u\in S}\deg(u)}{\sum_{v\in V} \deg(v)}.
    \]
\end{lemma}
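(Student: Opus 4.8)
The plan is to prove each identity by exhibiting a martingale (a conserved quantity) and then applying the optional stopping theorem. The key is to find, for each updating rule, a weight function on the vertices so that the total weight of the current mutant set is a martingale as the process evolves. First I would record the one-step transition probabilities when $r=1$. For Birth-death, a node $u$ is chosen to reproduce with probability $1/N$ and its offspring replaces a neighbor $v$ with probability $\w(u,v)/\deg(u)$, so the directed event ``$u$ replaces $v$'' has probability $\frac1N\cdot\frac{\w(u,v)}{\deg(u)}$. For death-Birth, a node $v$ dies with probability $1/N$ and is repopulated by a neighbor $u$ with probability $\w(u,v)/\deg(v)$, so ``$u$ replaces $v$'' has probability $\frac1N\cdot\frac{\w(u,v)}{\deg(v)}$.

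For Birth-death I would weight each mutant node $u$ by $\phi(u)=1/\deg(u)$ and consider $X_t=\sum_{u\in S_t}\phi(u)$, where $S_t$ is the mutant set after $t$ steps. The crucial observation is that $X_t$ changes only when a reproduction event crosses the boundary between $S_t$ and $V\setminus S_t$; reproduction entirely inside $S_t$ or entirely inside its complement leaves the set, and hence $X_t$, unchanged, so such steps drop out of the expected-change computation. Summing over boundary edges, the contribution of an edge $(u,v)$ with $u\in S_t$, $v\notin S_t$ is $\phi(v)\cdot\frac1N\frac{\w(u,v)}{\deg(u)}-\phi(u)\cdot\frac1N\frac{\w(u,v)}{\deg(v)}$, corresponding to $v$ turning into a mutant versus $u$ turning into a resident. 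Substituting $\phi=1/\deg$ turns each such term into $\frac{\w(u,v)}{N\deg(u)\deg(v)}$ minus itself, i.e. exactly zero, so $X_t$ is a martingale. The analogous computation for death-Birth uses the weight $\psi(u)=\deg(u)$: the boundary term becomes $\deg(v)\cdot\frac1N\frac{\w(u,v)}{\deg(v)}-\deg(u)\cdot\frac1N\frac{\w(u,v)}{\deg(u)}=\frac{\w(u,v)}{N}-\frac{\w(u,v)}{N}=0$, so $Y_t=\sum_{u\in S_t}\deg(u)$ is a martingale.

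To conclude, I would invoke optional stopping at the absorption time $\tau$. Since $G_N$ is finite and connected, the Moran chain reaches a homogeneous state almost surely, $\tau$ is finite almost surely, and $X_t,Y_t$ take finitely many values and so are bounded; hence $\E[X_\tau]=X_0$ and $\E[Y_\tau]=Y_0$. At time $\tau$ the mutants have either fixated, contributing the full sum $\sum_{v\in V}1/\deg(v)$ (resp. $\sum_{v\in V}\deg(v)$), or gone extinct, contributing $0$. Therefore $\fpn(G_N,S)\sum_{v\in V}1/\deg(v)=X_0=\sum_{u\in S}1/\deg(u)$, and rearranging gives the first formula; the death-Birth identity follows identically from $\fpdn(G_N,S)\sum_{v\in V}\deg(v)=Y_0=\sum_{u\in S}\deg(u)$.

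The main thing to get right is the martingale verification, specifically the bookkeeping that only boundary edges contribute and that each boundary edge cancels exactly. This hinges on choosing the vertex weights to be inversely proportional to degree for Birth-death and proportional to degree for death-Birth; these are precisely the \emph{reproductive values} of the two dynamics under neutral drift, and they are dual to each other, which is the conceptual heart of the statement. The remaining technical point, justifying optional stopping, is routine because the state space is finite and fixation occurs in finite expected time; the only modeling subtlety worth flagging is the convention for steps that leave the configuration unchanged, which I handle by noting they contribute nothing to the expected one-step change.
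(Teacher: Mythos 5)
Your proof is correct and takes essentially the same approach as the paper: the per-boundary-edge cancellation you verify with vertex weights $1/\deg(u)$ (Birth-death) and $\deg(u)$ (death-Birth) is exactly the balance condition $p_{u\to v}\cdot\phi(v)=p_{v\to u}\cdot\phi(u)$ that the paper checks. The only difference is presentational: you spell out the martingale/optional-stopping conclusion that the paper compresses into ``it suffices to check.''
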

\begin{proof}
    Let $p_{u\to v}$ be the probability that, in a single step, an individual at node $u$ produces an offspring that replaces an individual at node $v$. 
    For Birth-death updating, it suffices to check that for any subset $S\subset V$ of mutant nodes and any edge $(u,v)$ connecting a mutant node $u\in S$ and a non-mutant node $v\not\in S$ we have
    \[
    p_{u\to v}\cdot \frac{1/\deg(v)}{\sum_{v'\in V} 1/\deg(v')} =p_{v\to u}\cdot \frac{1/\deg(u)}{\sum_{v'\in V} 1/\deg(v')}.
    \]
    Since for Birth-death updating and $r=1$ we have $p_{u\to v}=\frac1N\cdot \frac{\w(u,v)}{\deg(u)}$, both sides rewrite as
    \[\frac{\frac1N\cdot \frac{\w(u,v)}{\deg(u)\deg(v)} }{\sum_{v'\in V} 1/\deg(v')},
    \]
    and so the claim is proved.
    Likewise, for death-Birth updating it suffices to check that
    \[
        p_{u\to v}\cdot \frac{\deg(v)}{\sum_{v'\in V} \deg(v')} =p_{v\to u}\cdot \frac{\deg(u)}{\sum_{v'\in V} \deg(v')}.
    \]
    Since for death-Birth updating and $r=1$ we have $p_{u\to v}=\frac1N\cdot \frac{\w(u,v)}{\deg(v)}$, this time both sides rewrite as
    \[
    p_{u\to v}\cdot \frac{\frac{\w(u,v)}N}{\sum_{v'\in V} \deg(v')}.\qedhere
    \]
\end{proof}

The proof of our positive result relies on three existing results.
For convenience, we list them here.
First, there exist unweighted graphs called \textit{Incubators} that are strong amplifiers under Birth-death updating~\cite[Theorem~2]{goldberg2019asymptotically}.
\begin{lemma}\label{thm:bdamp}
    There exists a family of graphs $\abd_N$ such that for all $r>1$, we have
    \[
      \fp(\abd_N) \ge 1-\calO(n^{-1/3}).
    \]
\end{lemma}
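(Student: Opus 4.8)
The plan is to exhibit an explicit family $\abd_N$ with $|V|=N$ and to bound from below the fixation probability of a single uniformly placed mutant under Birth-death updating, for every fixed $r>1$, with the rate constant hidden in the $\calO(\cdot)$ allowed to depend on $r$. The amplification must be built into the graph rather than extracted from the fitness advantage: the structure should funnel a freshly placed mutant into a region from which its lineage is almost impossible to dislodge, and it should do so robustly enough that the \emph{same} family works for all $r>1$ simultaneously.

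The structural principle I would exploit is the temperature $T(u)=\sum_{v\sim u}1/\deg(v)$ discussed above: under Birth-death updating a \emph{cold} node (small $T(u)$) is rarely overwritten, whereas a node that reproduces onto many low-degree neighbors spreads its offspring efficiently. I would therefore organize the vertices into a small number of high-degree \emph{hub} levels---cold, hence hard to invade once captured by a mutant---fed by a large reservoir of low-degree \emph{feeder} vertices, so that reproduction flows predominantly from feeders toward hubs. A mutant born at a feeder is then carried upward, and to push the success probability toward $1$ I would add redundancy (many parallel feeder units, in the spirit of an incubator) so that the failure of a single local attempt does not doom the lineage.

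The analysis would then proceed in several steps. First, fix the parameters---the number of levels and the population sizes at each level---and verify that the total is $N$. Second, compute the one-step transition probabilities to establish the upward bias: whenever the hubs are not yet saturated, the probability that the mutant count increases should exceed the probability that it decreases by a multiplicative factor bounded away from $1$ (for each fixed $r>1$). Third, couple the mutant count to a biased random walk and apply a supermartingale / gambler's-ruin estimate to show that, conditioned on a hub being captured, eventual extinction has probability $\calO(N^{-1/3})$. Fourth, average over the uniform initial location and optimize the level sizes so that the two competing error sources---failing to climb from the initial feeder up to a hub, and losing the hub afterwards---are balanced; it is this optimization that fixes the exponent $1/3$.

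The hardest part will be the random-walk analysis in the regimes where the per-step bias is small---both when $r$ is close to $1$ and deep inside the reservoir, where degrees are low---since there the gambler's-ruin bounds are delicate and one must ensure that the positive drift accumulated across the many feeder-to-hub transitions genuinely dominates the $\calO(N^{-1/3})$ loss without the hidden constant degenerating. Getting the single family to work uniformly over all $r>1$, while simultaneously tuning the parameters to obtain the tight polynomial rate rather than a merely $o(1)$ rate, is the crux of the argument.
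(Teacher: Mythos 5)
There is a genuine gap, and it starts with a mismatch of expectations: the paper does not prove this lemma at all. It is imported verbatim from the literature --- the family $\abd_N$ is the unweighted \emph{Incubator} family, and the paper's entire ``proof'' is the citation \cite[Theorem~2]{goldberg2019asymptotically}. Your proposal instead attempts to rebuild a Birth-death superamplifier from scratch. That is legitimate in principle, but what you have written is a research plan, not a proof: every step that carries the actual mathematical weight is deferred, and you yourself flag the two hardest ones (the small-bias random-walk analysis and uniformity over all $r>1$) as open.

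Concretely: (i) the construction is never fixed --- the number of levels, their sizes, the degrees and the edge structure are all left as parameters ``to be optimized,'' so there is no object about which the claimed bound could even be checked; (ii) the central reduction, coupling the mutant count to a one-dimensional random walk whose upward bias is bounded away from $1$ ``whenever the hubs are not yet saturated,'' is unjustified and false in general: on non-regular graphs the mutant count is not a Markov chain, the forward bias depends on \emph{which} nodes are mutants rather than how many, and the simplest hub--feeder structure of the kind you describe (the Star) tops out at fixation probability about $1-1/r^2$, far from $1-\calO(N^{-1/3})$. The published superamplifier analyses (the Megastars of Galanis et al.\ and the Incubators of Goldberg et al.) are long precisely because they must replace this naive coupling with careful phase-by-phase domination arguments; asserting ``redundancy in the spirit of an incubator'' assumes the very thing to be proved; (iii) the balancing argument that allegedly ``fixes the exponent $1/3$'' is asserted, never performed, and the exponent is a delicate feature of the specific construction, not a generic outcome of equating two error terms. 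As it stands the proposal does not establish the lemma; the efficient route here is the paper's own, namely invoking the existing Incubator theorem, and any self-contained alternative would require essentially reproducing that paper's analysis.
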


Second, there exist edge-weighted graphs called \textit{Separated Hubs} that are substantial amplifiers under death-Birth updating~\cite[Theorem 3]{allen2020transient}.
\begin{lemma}\label{thm:dbamp}
    There exists a family of graphs $\adb_N$ such that for all $r>1$, we have
    \[
      \fpd(G_N) = \frac{n}{2n+1} \cdot \frac{1 - \frac{1}{r^3}}{1 - \frac{1}{r^{3n}}}.
    \]
\end{lemma}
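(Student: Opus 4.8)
The plan is to exhibit the weighted Separated-Hubs family explicitly and then reduce its death-Birth dynamics to a one-dimensional biased random walk that admits a closed-form solution. Concretely, I would take $\adb_N$ to consist of one central node together with $n$ identical two-node ``blade'' units, each unit comprising a hub and a leaf, so that $N=2n+1$. The edge weights live on two widely separated scales: the weights inside each unit are large, while the weights coupling distinct units through the center are small. The role of this weight separation is to force a clean timescale hierarchy --- each unit resolves to an internal homogeneous state (all-mutant or all-resident) before any cross-unit replacement occurs --- so that the only long-lived configurations are indexed by the number $k\in\{0,1,\dots,n\}$ of units currently saturated by mutants.

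The core of the argument is to show that, on this reduced state space, the count $k$ evolves as a birth-death chain whose forward-to-backward odds equal $r^3$ at every interior state. First I would invoke the automorphism group permuting the $n$ identical units to argue that the fixation probability depends only on $k$, and not on which units are saturated. Then I would compute, for death-Birth updating, the exact ratio of the rate of a productive cross-unit event (a saturated unit exporting a mutant that goes on to capture an empty unit) to the rate of the reverse event. Because the Birth step in death-Birth updating weights competitors by $\f(u)\cdot\w(u,v)$, the local geometry is arranged so that this ratio is $r^3$ rather than the naive $r$. This cubing is exactly what produces amplification: it replaces $1-1/r$ by $1-1/r^3$ in the fixation formula, and the resulting amplification factor $\tfrac12(1+1/r+1/r^2)$ exceeds $1$ precisely for $r<\varphi$, the golden ratio --- a consistency check that matches the known transient nature of this amplifier.

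With the reduction in hand, the remaining steps are routine. Solving the gambler's-ruin recurrence for a chain on $\{0,\dots,n\}$ with interior odds $r^3$ gives fixation probability $\frac{1-r^{-3k}}{1-r^{-3n}}$ from state $k$. Finally I would average over the $N=2n+1$ uniformly random initial placements of the single mutant, tracking which starting node lands the process in which state of the walk: the $n$ leaf-type nodes are the productive starting positions, each reaching state $1$ and hence contributing $\frac{1-1/r^3}{1-1/r^{3n}}$, while the remaining $n+1$ nodes (the hubs and the center) contribute only lower-order terms that vanish in the weight limit. The bookkeeping then collapses to the stated prefactor $\frac{n}{2n+1}$, yielding $\fpd(\adb_N)=\frac{n}{2n+1}\cdot\frac{1-1/r^3}{1-1/r^{3n}}$.

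I expect the main obstacle to be making the timescale-separation reduction exact rather than merely asymptotic. A ``with high probability each unit saturates first'' argument yields the formula only up to error terms, whereas the lemma asserts an exact equality; the delicate part is therefore choosing the weight scales (or passing to the appropriate weight limit) and supplying a martingale or harmonic-function argument on the full configuration space that shows the intermediate mixed configurations contribute exactly zero net correction. Pinning down the cross-unit odds ratio --- verifying that it equals $r^3$ on the nose, together with the exactness of the initialization prefactor --- is the computation that will require the most care.
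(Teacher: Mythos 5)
You should know that the paper itself does not prove this lemma: it is imported, essentially verbatim (with some notational slippage --- $G_N$ should be $\adb_N$, and $N=2n+1$ is implicit), from Theorem~3 of~\cite{allen2020transient}, where the family in question is the weighted Fan/Separated Hubs construction. Your proposal therefore reconstructs the proof of the cited external result rather than competing with an argument in this paper, and your reconstruction follows the same blueprint as the source: a hub-and-blades graph on $N=2n+1$ nodes with widely separated weight scales, collapse of the death-Birth dynamics onto a birth-death meta-chain indexed by the number $k$ of mutant-saturated blades, per-blade odds $r^3$ coming from the three fitness-biased replacements needed to convert a blade, the gambler's-ruin solution $\bigl(1-r^{-3k}\bigr)/\bigl(1-r^{-3n}\bigr)$, and averaging over the uniform initial placement. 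Your consistency check that the amplification factor $\tfrac12(1+r^{-1}+r^{-2})$ exceeds $1$ exactly for $r<\varphi$ is also the right sanity check.

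Two points need repair. First, your initialization bookkeeping is wrong as stated. Under death-Birth updating with the intra-unit weight dominating, a lone mutant on a blade node saturates its unit precisely when its unit-mate is the first of the pair chosen to die: if the mutant itself dies first, the Birth step is won by the unit-mate (a resident), which carries almost all of the incoming weight. So each of the $2n$ blade nodes is a productive start with probability tending to $\tfrac12$, not ``the $n$ leaves with probability $1$ and the hubs negligibly.'' The totals coincide ($2n\cdot\tfrac12=n$), which is why the prefactor $\tfrac{n}{2n+1}$ still emerges, but the mechanism you describe is not what happens, and a hypothetical construction in which leaves truly saturated with probability $1$ would yield a different prefactor. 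Second, the exactness obstacle you flag at the end is real, and it resolves \emph{against} the lemma as literally stated: the timescale-separation argument yields the displayed formula only in the limit of diverging weight separation, and the cited theorem is likewise an asymptotic (limit) statement in the weight parameter; for any fixed finite weights the fixation probability only approaches the formula, and no martingale or harmonic-function argument will make the mixed-configuration corrections vanish identically for all $r$. You should not chase that refinement: downstream the paper consumes this lemma only up to $\calO(N^{-1})$ (resp.\ $\calO(2^{-N})$) error terms in the proof of \cref{lem:dB_fix}, so the asymptotic version you can actually prove, say $\fpd(\adb_N)\ge \frac{n}{2n+1}\cdot\frac{1-r^{-3}}{1-r^{-3n}}-o(1)$, is all that is used, and the equality sign in the lemma is best read as shorthand for that limit.
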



Third, the evolutionary dynamics terminates polynomially quickly in terms of the population size $N$, under both the Birth-death updating~\cite[Theorem 9]{diaz2014approximating} and the death-Birth updating \cite[Theorem 1]{durocher2022invasion}.
\begin{lemma}\label{thm:expTime}
Fix $r>1$.
    For Bd and dB process on an undirected graph with $N$ vertices with the highest ratio between edge weights $\frac{1}{\eps}$,
    the expected fixation time is in $\calO(\frac{N^4}{\eps})$.
\end{lemma}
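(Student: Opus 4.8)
The plan is to prove the two cases in parallel, using the pair of explicit martingales furnished by \cref{lem-neutral}. For Birth-death updating I would track the potential $\phi(S)=\sum_{u\in S}1/\deg(u)$, and for death-Birth updating the potential $\phi(S)=\sum_{u\in S}\deg(u)$. The computation in the proof of \cref{lem-neutral} shows that at $r=1$ each of these is exactly a martingale, and a short sign check shows that for $r>1$ it becomes a submartingale, the extra drift only pushing the process toward the all-mutant state. In either case $\phi$ is bounded and is constant precisely on the two absorbing configurations $S=\emptyset$ and $S=V$, so controlling $\phi$ controls absorption.

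First I would bound the number of \emph{effective} steps, meaning steps that actually change the occupied set. Applying optional stopping to the quadratic variation of $\phi$ gives $\E\big[\sum_t(\Delta\phi_t)^2\big]=\calO(\Phi^2)$, where $\Phi=\sum_v\deg(v)$ (dB) or $\Phi=\sum_v 1/\deg(v)$ (Bd) is the width of the range of $\phi$. Since every effective step moves $\phi$ by at least the smallest per-node increment $\delta$ (namely $\min_v\deg(v)$ for dB and $1/\max_v\deg(v)$ for Bd), the expected number of effective steps is $\calO(\Phi^2/\delta^2)$. Both $\Phi$ and $\delta$ are pinned down by the window in which the weighted degrees lie, which in turn is governed by $N$ and by the weight ratio $1/\eps$.

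Second I would translate effective steps into calendar steps. Whenever $0<|S|<N$ the graph is connected, so some edge crosses the mutant boundary; bounding the fitness factor by $1/(rN)$ and the weight factor by $\eps/(N-1)$ from below shows that each calendar step is effective with probability at least $q=\Omega(\eps/N^2)$, uniformly over non-absorbing states. A Wald-type argument applied to the jump chain then gives $\E[\tau]\le q^{-1}\cdot\E[\text{effective steps}]$.

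The main obstacle is quantitative rather than conceptual: the crude product of the two bounds above overshoots the target, so the real work is to carry the degree window through both steps without double-counting the weight ratio, so that everything collapses to exactly $\calO(N^4/\eps)$. The cleanest way to obtain the sharp exponent is to import the tight forms already established for these processes --- \cite{diaz2014approximating} for Birth-death and \cite{durocher2022invasion} for death-Birth --- and to record only the weight-ratio slowdown that their arguments produce when edges are weighted; the martingale computation above is what makes the $N^4/\eps$ scaling plausible and organizes the reduction to those references.
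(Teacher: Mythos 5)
Your proposal coincides, in its one load-bearing step, with the paper's own treatment: the paper never proves this lemma internally but imports it wholesale from \cite[Theorem 9]{diaz2014approximating} for the Bd process and \cite[Theorem 1]{durocher2022invasion} for the dB process, which is exactly the reduction you close with. Your martingale preamble is moreover faithful to how those references actually argue (for Bd, the potential $\sum_{u\in S}1/\deg(u)$ is precisely theirs), and you are right to flag it as only heuristic here: the quadratic-variation route discards the positive $(r-1)$-drift of the submartingale and stalls near $\calO(N^6/\eps^3)$, whereas the sharp $\calO(N^4/\eps)$ scaling comes from an additive-drift (expected per-step increase of the same potential) argument, available precisely because the lemma fixes $r>1$.
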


\section{Negative result 2}
In this section, we show that one fixed neutral mutant cannot have a better fixation probability in both processes than on a complete graph.
This means that even if we can choose the starting position, we are not guaranteed to increase the fixation probability for both processes.

\begin{theorem}\label{thm:fixed-vertex}
    Let $G_N$ be a graph and $v$ an initial mutant node. Then one of the following three cases occurs:
    \begin{enumerate}
        \item $\fpn(G_N,v)< \fpn(K_N)$;
        \item $\fpdn(G_N,v)<\fpdn(K_N)$;
        \item $\fpn(G_N,v)= \fpn(K_N)$ and $\fpdn(G_N,v)=\fpdn(K_N)$.
        \end{enumerate}
\end{theorem}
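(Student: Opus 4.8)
The plan is to reduce everything to the explicit neutral-drift formulas of~\cref{lem-neutral} and then exploit a multiplicative identity together with the Cauchy--Schwarz inequality. First I would apply~\cref{lem-neutral} in the single-vertex case $S=\{v\}$ to record
\[
\fpn(G_N,v)=\frac{1/\deg(v)}{\sum_{w\in V}1/\deg(w)},
\qquad
\fpdn(G_N,v)=\frac{\deg(v)}{\sum_{w\in V}\deg(w)}.
\]
Specializing to the complete graph $K_N$, where every vertex has degree $N-1$, both formulas collapse to the baseline value $\fpn(K_N)=\fpdn(K_N)=1/N$.

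The key observation is to multiply the two expressions for $G_N$: the factor $\deg(v)$ of the chosen starting vertex cancels, and the result no longer depends on $v$ at all,
\[
\fpn(G_N,v)\cdot\fpdn(G_N,v)
=\frac{1}{\bigl(\sum_{w\in V}1/\deg(w)\bigr)\bigl(\sum_{w\in V}\deg(w)\bigr)}.
\]
Applying Cauchy--Schwarz (equivalently, the AM--HM inequality) to the degree sequence gives $\bigl(\sum_{w}\deg(w)\bigr)\bigl(\sum_{w}1/\deg(w)\bigr)\ge N^2$, with equality if and only if all degrees coincide. Hence the product of the two fixation probabilities is at most $1/N^2=\fpn(K_N)\cdot\fpdn(K_N)$.

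To conclude the trichotomy I would argue against the failure of all three cases. Suppose neither case~(1) nor case~(2) holds, so that $\fpn(G_N,v)\ge 1/N$ and $\fpdn(G_N,v)\ge 1/N$ simultaneously. Multiplying these lower bounds forces the product to be at least $1/N^2$, which combined with the upper bound above pins the product to exactly $1/N^2$. Equality in Cauchy--Schwarz then forces all degrees to be equal, i.e.\ $G_N$ is regular; substituting a common degree back into the two formulas yields $\fpn(G_N,v)=\fpdn(G_N,v)=1/N$, which is precisely case~(3). The only real obstacle is conceptual rather than technical: individually, neither fixation probability admits a clean comparison with $1/N$, but taking the \emph{product} eliminates the dependence on the starting vertex and turns the statement into a pure inequality about the degree sequence, after which Cauchy--Schwarz and its equality condition finish the argument with no further computation.
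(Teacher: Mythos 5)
Your proposal is correct and follows essentially the same route as the paper's own proof: the neutral-drift formulas of \cref{lem-neutral}, the observation that $\deg(v)$ cancels in the product $\fpn(G_N,v)\cdot\fpdn(G_N,v)$, and the AM--HM/Cauchy--Schwarz bound with its equality condition. The only difference is cosmetic: you phrase the conclusion as a contrapositive (assuming cases 1 and 2 both fail) while the paper splits directly on whether equality holds, which is logically the same argument.
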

\begin{proof}
    First, note that $\fpn(K_N)=\fpdn(K_N)=1/N$. 
    Next, recall the known formulas for the fixation probability on undirected graphs under neutral drift (see~\cref{lem-neutral} and~\cite{broom2010two,maciejewski2014reproductive}], namely:
    \[
    \fpn(G_N,v)=\frac{1/\deg(v)}{\sum_{u\in V} 1/\deg(u)}
    \quad\text{and}\quad
    \fpdn(G_N,v)=\frac{\deg(v)}{\sum_{u\in V} \deg(u)}.
    \]
    As the final ingredient, note that for any $N$ non-negative numbers $x_1,\dots,x_N$ we have a bound
    \[ \left(\frac1{x_1}+\frac1{x_2}+\dots+\frac1{x_N}\right)\cdot (x_1+x_2+\dots+x_N) \ge N^2.
    \]
    This follows e.g.\ from the inequality between the arithmetic and harmonic mean of numbers $x_1,\dots,x_N$ (called AM-HM), or from Cauchy-Schwarz inequality.
    Moreover, the equality occurs if and only if $x_1=x_2=\dots=x_N$.
    Applying this bound to $x_i=\deg(v_i)$ we obtain
    \[
    \fpn(G_N,v)\cdot \fpdn(G_N,v) = 
    \frac{1/\deg(v)}{\sum_{u\in V} 1/\deg(u)}
    \cdot 
    \frac{\deg(v)}{\sum_{u\in V} \deg(u)}
     = \frac1{\left(\sum_{u\in V} 1/\deg(u)\right)\cdot \left(\sum_{u\in V} \deg(u)\right) } \le \frac1{N^2}.
    \]
    If equalities occur everywhere then $\deg(v_1)=\dots=\deg(v_N)$, thus $\fpn(G_N,v)=\fpdn(G_N,v)=1/N$. Otherwise, the right-hand side is strictly less than $1/N^2$, thus at least one of $\fpn(G_N,v)$ and $\fpdn(G_N,v)$ is strictly less than $1/N$.
\end{proof}

The following example illustrates that there exists a graph and a subset $S=\{u,v\}$ of $k=2$ nodes, such that the fixation probability starting from mutants at both $u$ and $v$ is strictly greater than fixation probability starting from $k=2$ mutant nodes on a well-mixed population, both for the Birth-death and for the death-Birth updating.

\begin{figure}[h]
  \centering
   \includegraphics[width=\linewidth]{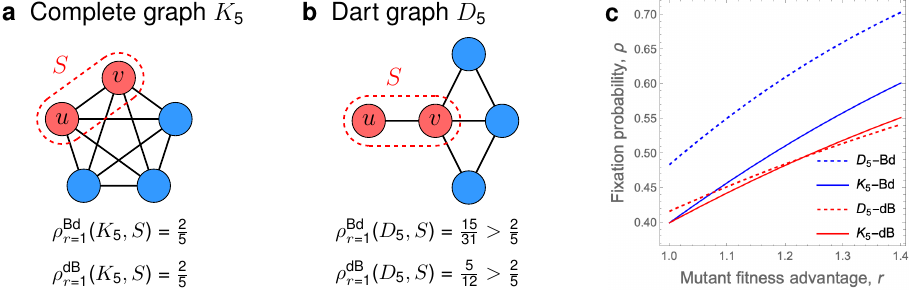}
\caption{
\textbf{Mutant subset that amplifies for both Bd and dB.}
\textbf{a,} With two neutral mutants ($r=1$) on a complete graph $K_N$, the fixation probability is equal to $2/N$ under both Birth-death and death-Birth updating.
\textbf{b,} When two neutral mutants initially occupy vertices $u$ and $v$ of the so-called dart graph $D_5$, the fixation probability under both Birth-death and death-Birth updating is increased. 
\textbf{c,} As $r$ increases above roughly $r\approx 1.24$, the fixation probability on the Dart graph under death-Birth updating drops below the reference value of two mutants on a complete graph $K_5$. Under Birth-death updating, the effect persists for $r\ge 1$. (Data obtained by numerically solving the underlying Markov chains.)
}
\label{fig:fp-subset}
\end{figure}

The intuition behind the result is that node $u$ is a really good initial mutant node for Birth-death updating, and node $v$ is a really good initial mutant node for death-Birth updating.
Together, they form an above-average set of two mutant nodes, even when compared to a complete graph with two initial mutants.


\section{Negative result 3}
In this section, we prove that for any fixed vertex, in the first step, the ratio between increasing and decreasing the number of mutants cannot be better than in the complete graph in both processes.
This means we cannot find a vertex from which both processes spread better than in the complete graph.
To achieve amplification for both processes, we know that some vertices will be better for Bd and some for dB amplification.

\begin{theorem}\label{thm:first-step}
    Let $G_N$ be a graph, $u$ an initial mutant node, and $r\ge 1$.
    Then one of the following three cases occurs:
    \begin{enumerate}
        \item $\gam(G_N,u)< \gam(K_N)$;
        \item $\gamd(G_N,u)<\gamd(K_N)$;
        \item $\gam(G_N,u)= \gam(K_N)$ and $\gamd(G_N,u)=\gamd(K_N)$.
        \end{enumerate}
\end{theorem}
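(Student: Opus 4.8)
The plan is to reduce each first-step quantity to an explicit elementary expression and then compare the two through the temperature of the starting node.

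First, for the Birth-death process with a single mutant at $u$, only two kinds of events change the mutant count: the mutant reproduces across some edge (an increase), or some resident neighbour reproduces onto $u$ (a decrease). Writing $F=r+(N-1)$ for the total fitness and $T(u)=\sum_{v\sim u}\w(u,v)/\deg(v)$ for the (weighted) temperature of $u$, the per-step increase probability is $r/F$ and the per-step decrease probability is $T(u)/F$, so
$$\gam(G_N,u)=\frac{r}{r+T(u)},\qquad \gam(K_N)=\frac{r}{r+1}.$$
Hence $\gam(G_N,u)$ exceeds, equals, or falls below $\gam(K_N)$ exactly according to whether $T(u)<1$, $=1$, or $>1$. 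Since $\sum_u T(u)=\sum_v \deg(v)^{-1}\sum_{u\sim v}\w(u,v)=N$, the average temperature is $1$, so ``above-average temperature'' is precisely $T(u)>1$, the regime giving case~1.

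Second, I would perform the analogous bookkeeping for death-Birth. Here the count decreases only when $u$ itself dies (probability $1/N$), and increases when a resident neighbour $v$ dies and is re-filled by $u$. Setting
$$D(u)=\sum_{v\sim u}\frac{r\,\w(u,v)}{\deg(v)+(r-1)\w(u,v)},$$
the increase probability is $D(u)/N$ and the decrease probability is $1/N$, whence $\gamd(G_N,u)=D(u)/(D(u)+1)$; the same computation on $K_N$ gives $\gamd(K_N)=D_K/(D_K+1)$ with $D_K=(N-1)r/(N+r-2)$. As $x\mapsto x/(x+1)$ is increasing, the whole comparison reduces to comparing $D(u)$ with $D_K$. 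The heart of the argument is the implication $T(u)\le 1\Rightarrow D(u)\le D_K$. I would write $t_v=\w(u,v)/\deg(v)$, so that $\sum_{v\sim u}t_v=T(u)$ and $D(u)=\sum_{v\sim u}g(t_v)$ for $g(t)=rt/(1+(r-1)t)$. For $r\ge 1$ the function $g$ is increasing and concave (strictly concave for $r>1$) with $g(0)=0$, so Jensen's inequality over the $k\le N-1$ neighbours of $u$ gives $D(u)\le k\,g(T(u)/k)$. A short check shows that $k\mapsto k\,g(T/k)=rT/(1+(r-1)T/k)$ is increasing in $k$ and that $T\mapsto (N-1)g(T/(N-1))$ is increasing in $T$; chaining these with $k\le N-1$ and $T(u)\le 1$ yields $D(u)\le (N-1)g(1/(N-1))=D_K$, with the inequality strict as soon as $T(u)<1$.

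Finally I would assemble the trichotomy. If $T(u)>1$ then $\gam(G_N,u)<\gam(K_N)$ (case~1). If $T(u)<1$ then the strict bound $D(u)<D_K$ gives $\gamd(G_N,u)<\gamd(K_N)$ (case~2). If $T(u)=1$ then $\gam(G_N,u)=\gam(K_N)$ while $D(u)\le D_K$, so either $\gamd(G_N,u)<\gamd(K_N)$ (case~2) or both equalities hold (case~3); in particular the forbidden situation of a simultaneous strict improvement never occurs. I expect the main obstacle to be the death-Birth bound: setting up the concave function $g$, getting the monotonicity in the neighbour count $k$ to point in the right direction, and tracking the equality conditions (all $t_v$ equal, $k=N-1$, and $T(u)=1$, which together force the complete-graph configuration) carefully enough to pin down exactly when case~3 applies.
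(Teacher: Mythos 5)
Your proposal is correct and follows essentially the same route as the paper's proof: the identity $\gam(G_N,u)=r/(r+T(u))$ via the temperature $T(u)$, reduction of the death-Birth comparison to bounding the sum $D(u)$ against its complete-graph value through the monotone map $x\mapsto x/(x+1)$, and Jensen's inequality for the concave increasing function $t\mapsto rt/(1+(r-1)t)$ followed by monotonicity in the neighbour count and in $T(u)$ (you apply these two monotonicity steps in the opposite order, which is immaterial). The only minor slip is the closing parenthetical claiming the equality conditions force the complete graph; they only force a local condition at $u$ (namely $u$ adjacent to all nodes with $\w(u,v)/\deg(v)=1/(N-1)$), but since the theorem does not require characterizing case~3, this does not affect the proof.
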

\begin{proof}
Denote by $T(u)= \sum_{v\colon (u,v)\in E} \frac{\w(u,v)}{\deg v}$ the so-called \textit{temperature} of node $u$, that is, the rate at which node $u$ is replaced by its neighbors in the neutral case.

Denote by $\pbd^+=\pbd^+(G_N,u)$ the probability that in a single step of the Moran Birth-death process the mutant reproduces, and by $\pbd^-=\pbd^-(G_N,u)$ the probability that it gets replaced by a resident.
Denoting the total fitness by $F=N+(r-1)$ we have
\[
\pbd^+ = \frac rF \quad\text{and}\quad
\pbd^- = \sum_{v\colon (u,v)\in E} \frac1{F}\cdot \frac{\w(u,v)}{\deg v},
\]
and thus
\[
\gam(G_N,u)=\frac{\pbd^+}{\pbd^+ + \pbd^-}
 = \frac{r}{r+  \sum_{v\colon (u,v)\in E} \frac{\w(u,v)}{\deg v}} = \frac r{r+T(u)}.
\]
In particular, in the complete graph $K_N$ each node has temperature $1$, and thus 
\[
\gam(K_N) =\frac r{r+1}.
\]
If $T(u)\ge 1$ then $r/(r+T(u))\le r/(r+1)$ and hence $\gam(G_N,u)\le \gam(K_N)$ with equality if and only if $T(u)=1$.
From now on, suppose $T(u)\le 1$.

Consider Moran death-Birth process and define
the quantities $\pdb^+=\pdb^+(G_N,u)$ and $\pdb^-=\pdb^-(G_N,u)$ as above. Then
\[
\pbd^+ = \sum_{v\colon (u,v)\in E} \frac1F\cdot \frac{r\cdot\w(u,v)}{(r-1)\w(u,v) + \deg(v)}\quad\text{and}\quad
\pbd^- =  \frac1F,
\]
therefore
\[
\gamd(G_N,u) = \frac{ \sum_{v\colon (u,v)\in E} \frac{r\cdot\w(u,v)}{(r-1)\w(u,v) + \deg(v)}}
{1+\sum_{v\colon (u,v)\in E}\frac{r\cdot\w(u,v)}{(r-1)\w(u,v) + \deg(v)}}.
\]
In particular, for the complete graph $K_N$ and any its node $u$ we have
\[
\sum_{v\colon (u,v)\in E}\frac{r\cdot \w(u,v)}{(r-1)\w(u,v) + \deg(v)} = (N-1)\cdot \frac{r}{(r-1)+(N-1)}.
\]
Hence in order to prove $\gamd(G_N,u)\le \gamd(K_N)$, it suffices to prove
\[ 
\sum_{v\colon (u,v)\in E}\frac{r\cdot \w(u,v)}{(r-1)\w(u,v) + \deg(v)} \le \frac{(N-1)r}{(r-1)+(N-1)}.
\]

We rearrange
\[ 
\sum_{v\colon (u,v)\in E}\frac{\frac{\w(u,v)}{\deg(v)}}{(r-1)\frac{\w(u,v)}{\deg(v)} + 1} \le \frac{1}{\frac{r-1}{N-1}+1}.
\]
When $r=1$, the desired claim reduces precisely to $T(u)\le 1$. Suppose $r>1$, that is $r-1>0$, and consider a function $f\colon (0,\infty)\to(0,\infty)$ defined by $f(x)= \frac{x}{(r-1)x+1}$. Then $f$ is concave and increasing, therefore by Jensen's inequality 
we have
\[
    \sum_{v\colon (u,v)\in E}\frac{\frac{\w(u,v)}{\deg(v)}}{(r-1)\frac{\w(u,v)}{\deg(v)} + 1}
    \le 
    |N(u)|\cdot \frac{\frac{1}{|N(u)|}\sum_{v\colon (u,v)\in E}\frac{w(u,v)}{\deg(v)}}{(r-1)\frac{1}{|N(u)|}\sum_{v\colon (u,v)\in E}\frac{w(u,v)}{\deg(v)} + 1} = \frac{T(u)}{\frac{r-1}{|N(u)|}\cdot T(u) +1},
\]
where $|N(u)|=|\{v\colon (u,v)\in E\}|$ is the number of neighbors of $u$ in $G$.

Finally, since the function $f$ is increasing, using bounds $T(u)\le 1$ and $|N(u)|\le N-1$, the right-hand side is at most 
\[
\frac{T(u)}{\frac{r-1}{|N(u)|}\cdot T(u) +1}\le \frac{1}{\frac{r-1}{|N(u)|}+1}\le \frac{1}{\frac{r-1}{N-1}+1}
\]
as desired. For the equality to occur everywhere, we must in particular have $T(u)=1$, in which case the other equality $\gam(G_N,u)=\gam(K_N)$ holds too.
\end{proof}

\section{Positive result}

In this section, we prove the main positive result which states that there exists an undirected, edge-weighted graph that is simultaneously an amplifier of selection for Birth-death Moran process and for death-Birth Moran process (under uniform mutant initialization).
We first bound the number of steps until fixation or extinction for both processes (Bd and dB) and any graph.
Second, we show that for any graph, there is a good starting vertex where a mutant has fixation probability at least $\frac{1}{N}$.
Then we construct the graph and we prove that it is indeed an amplifier for both processes.

\subsection{Auxiliary statements}
\begin{lemma}\label{lem:fast}
    For Bd and dB process for any $r$ on an undirected graph with $N$ vertices with the ratio between edge weights at most $\frac{1}{\eps}$,
    the probability that the process is not completed after $\calO(N^5/\eps)$ steps
    is in $\calO(\frac{1}{2^N})$.
\end{lemma}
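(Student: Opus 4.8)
The plan is to boost the expectation bound from \cref{thm:expTime} into a high-probability (exponential-tail) bound via the standard combination of Markov's inequality with the Markov property. Write $T=\calO(N^4/\eps)$ for the upper bound on the expected absorption time supplied by \cref{thm:expTime}, where ``absorption'' (the process being ``completed'') means reaching a homogeneous state in which all nodes are mutants or all nodes are residents. The crucial point to extract from \cref{thm:expTime} is that this bound is \emph{uniform over all non-homogeneous starting configurations}: for every set $S\subseteq V$ of mutant-occupied nodes with $\emptyset\neq S\neq V$, the expected number of remaining steps until absorption, conditioned on the current configuration being $S$, is at most $T$. This is exactly what the worst-case analyses of~\cite{diaz2014approximating,durocher2022invasion} provide, since the maximum over all initial configurations of the expected absorption time is itself $\calO(N^4/\eps)$.

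Given this uniformity, the argument is short. First I would apply Markov's inequality: starting from any configuration, the probability that the process is not absorbed within $2T$ steps is at most $T/(2T)=1/2$. Next I would partition the time horizon into $N$ consecutive blocks, each of length $2T$, so the total length is $N\cdot 2T=\calO(N^5/\eps)$. By the Markov property, at the start of each block the future evolution depends only on the current configuration, and the uniform bound guarantees that, whatever that configuration is, the conditional probability of failing to absorb during that block is at most $1/2$. Chaining these conditional bounds across the $N$ blocks shows that the probability that the process remains non-homogeneous through all of them is at most $(1/2)^N$. Hence the probability that the process is not completed after $\calO(N^5/\eps)$ steps is at most $1/2^N=\calO(1/2^N)$, as claimed; the same reasoning applies verbatim to both the Bd and the dB updating, since \cref{thm:expTime} covers both.

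The only genuine obstacle is verifying the uniformity claim, namely that the expected-time bound $\calO(N^4/\eps)$ holds starting from \emph{every} intermediate configuration and not merely from a single initial mutant. If the cited theorems were stated only for a particular initial condition, one would re-inspect their proofs (which typically rest on a potential/martingale argument that is insensitive to the exact starting configuration) to confirm that the worst-case expected absorption time is still $\calO(N^4/\eps)$; once that is in hand, the Markov-chaining step is routine. A minor bookkeeping point is to ensure the constants hidden in the two $\calO$'s are compatible: the extra factor of $N$ gained in passing from $N^4$ to $N^5$ is precisely what is needed to run $N$ conditionally bounded phases and thereby drive the survival probability down to $2^{-N}$.
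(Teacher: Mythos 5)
Your proof is correct and follows essentially the same route as the paper's: apply Markov's inequality to the expected-time bound of \cref{thm:expTime}, split the horizon into $N$ epochs of length $2\cdot\calO(N^4/\eps)$, and chain the per-epoch failure probability of $1/2$ via the Markov property to get $2^{-N}$. Your explicit attention to the uniformity of the expected-time bound over all intermediate configurations is a point the paper's proof passes over silently ("if the process does not finish, the expected time is again $cN^4/\eps$"), so your write-up is, if anything, slightly more careful.
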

\begin{proof}
    From \Cref{thm:expTime}, we can take constant $c$ such that for both processes and all graphs with $N$ vertices,
    the expected time is at most $cN^4/\eps$.
    From Markov's inequality, the probability that the process takes more than $2cN^4/\eps$ steps is at most $\frac{1}{2}$.
    If the process does not finish, the expected time is again $cN^4/\eps$.
    That means we can take $N$ epochs of size $2cN^4/\eps$ each, and the probability that the process does not finish
    in any epoch is at most $\frac{1}{2^N}$.
\end{proof}

\begin{lemma}\label{lem:goodStart}
    For any graph $G_N$ with $N$ vertices any $r \ge 1$, and a process $p \in \{\text{Bd, dB}\}$ there exists a vertex $v$ such that 
    $\rho^{\operatorname{p}}_r(G_N, v) \ge \frac{1}{N}$.
\end{lemma}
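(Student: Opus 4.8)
The plan is to produce, for each process, an explicit lower bound on the single‑mutant fixation probability whose sum over all starting vertices equals exactly $1$; a pigeonhole step then yields the desired good vertex. The natural candidates for these bounds are the neutral fixation weights of \cref{lem-neutral}. Concretely, for a set $S\subseteq V$ of mutant‑occupied nodes I would define the potentials
\[
\Phi_{\mathrm{Bd}}(S)=\frac{\sum_{u\in S}1/\deg(u)}{\sum_{w\in V}1/\deg(w)}
\qquad\text{and}\qquad
\Phi_{\mathrm{dB}}(S)=\frac{\sum_{u\in S}\deg(u)}{\sum_{w\in V}\deg(w)},
\]
so that $\Phi_p(\emptyset)=0$ and $\Phi_p(V)=1$. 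By \cref{lem-neutral} the single‑vertex values $\pi^p_v:=\Phi_p(\{v\})$ are precisely $\fpn(G_N,\{v\})$ and $\fpdn(G_N,\{v\})$, and in either case $\sum_{v\in V}\pi^p_v=1$.

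The heart of the argument is to show that, along the trajectory $S_0,S_1,\dots$ of either process with $r\ge 1$, the potential $\Phi_p(S_t)$ is a \emph{submartingale}, i.e. $\E[\Phi_p(S_{t+1})\mid S_t]\ge\Phi_p(S_t)$. I would verify this by computing the one‑step expected change edge by edge: the configuration can change only through a boundary edge $(u,v)$ with $u\in S_t$, $v\notin S_t$, contributing a gain when the edge is used to turn $v$ into a mutant and a loss when it is used to turn $u$ into a resident. For Birth‑death the net contribution of such an edge works out to be proportional to $(r-1)\,\w(u,v)/(\deg(u)\deg(v))\ge 0$, which is exactly where the hypothesis $r\ge 1$ enters. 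For death‑Birth the net contribution is proportional to $\w(u,v)\bigl(r\deg(v)/F_v-\deg(u)/F_u\bigr)$, where $F_w$ denotes the total fitness‑weighted degree of the neighbors of $w$; since every individual has fitness at least $1$ we have the two‑sided bound $\deg(w)\le F_w\le r\deg(w)$, whence $r\deg(v)/F_v\ge 1\ge\deg(u)/F_u$ and the contribution is again nonnegative. Summing over all boundary edges gives the submartingale inequality in both cases.

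With the submartingale in hand, I would apply the optional stopping theorem at the absorption time $\tau$, which is finite almost surely (\cref{lem:fast}). Since $\Phi_p$ takes values in $[0,1]$ and $\Phi_p(S_\tau)\in\{0,1\}$ equals $1$ exactly on fixation, this yields $\rho^p_r(G_N,v)=\E[\Phi_p(S_\tau)]\ge\Phi_p(\{v\})=\pi^p_v$ for every starting vertex $v$. Summing over all $v$ gives $\sum_{v\in V}\rho^p_r(G_N,v)\ge\sum_{v\in V}\pi^p_v=1$, so the average over the $N$ vertices is at least $1/N$ and at least one vertex attains the average, which is the claim. I expect the main obstacle to be the clean bookkeeping of the death‑Birth submartingale step, where the replacement probabilities depend on the fitness‑weighted degrees $F_u,F_v$ of the current configuration rather than on fixed constants; the observation that rescues it is the uniform bound $\deg(w)\le F_w\le r\deg(w)$, which decouples the two directions of each boundary edge.
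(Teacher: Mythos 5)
Your proof is correct, and its skeleton coincides with the paper's: lower-bound each single-vertex fixation probability by the corresponding neutral value from \cref{lem-neutral}, observe that those values sum to $1$ over all vertices, and apply pigeonhole. The difference lies in how the comparison with the neutral values is established for $r\ge 1$. The paper does it in one line by citing an external monotonicity result (\cite[Theorem~6]{diaz2016absorption}: increasing $r$ cannot decrease fixation probability), after which it only needs $\sum_{v\in V}\rho_{r=1}(G_N,v)=1$. You instead prove the needed inequality $\rho^{p}_r(G_N,v)\ge\rho^{p}_{r=1}(G_N,v)$ from scratch, by showing that the neutral fixation weight $\Phi_p(S_t)$ is a bounded submartingale under the $r\ge 1$ dynamics and invoking optional stopping at the absorption time, which is a.s.\ finite by \cref{thm:expTime} since any fixed graph has a bounded edge-weight ratio. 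Both of your one-step computations are sound: for Bd the common factor $1/F$ cancels and each boundary edge contributes a net $\w(u,v)(r-1)/\bigl(FD\deg(u)\deg(v)\bigr)\ge 0$ with $D=\sum_{w\in V}1/\deg(w)$, and for dB your bounds $\deg(w)\le F_w\le r\deg(w)$ correctly give $r\deg(v)/F_v\ge 1\ge \deg(u)/F_u$, which decouples the two directions of each boundary edge even though $F_u,F_v$ depend on the whole current configuration. What your route buys is self-containedness (no reliance on the coupling-based monotonicity theorem) plus, as a by-product, the explicit pointwise domination of the neutral fixation probability for every starting vertex; what it costs is the extra bookkeeping that the paper's citation avoids.
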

\begin{proof}
    It suffices to prove the statement for $r=1$, since increasing the mutant fitness advantage $r$ increases its fixation probability~\cite[Theorem 6]{diaz2016absorption}.

    In the neutral case ($r=1$), we have 
    $\sum_{v\in V}\rho_{r=1}(G_N,v)=1$,
 thus there exists at least one vertex with fixation probability at least $\frac{1}{N}$.
\end{proof}

Note that in some cases, no starting vertex $v$ satisfies both $\fp(G_N,v)\ge 1/N$ and $\fpd(G_N,v)\ge 1/N$ simultaneously.
An example is a Star graph $S_3$ on $N=3$ vertices with center $c$ and leaves $l_1$, $l_2$ when $r=1$. Then $\fp(S_3,c)=1/5<1/3$ and $\fpd(S_3,l_1)=\fpd(S_3,l_2)=1/4<1/3$.
In fact, later we prove that in the neutral case no starting vertex is a strict improvement under both Bd and dB, see~\cref{thm:fixed-vertex}.

\subsection{Construction}
For given $N$ and $\gamma \in (0,1)$, we describe how to construct graph $A_{N,\gamma}$.
We show that for some $\gamma$, this graph is an amplifier for both processes for $r \in (1,1.2)$.
The graph $A_{N,\gamma}$ has two parts.
The first part is a graph $\abd_{(1-\ratio)N}$ (from \Cref{thm:bdamp}) on $(1-\ratio)N$ vertices, the second part is a graph $\adb_{\ratio N}$ (from \Cref{thm:dbamp}) on $\ratio N$ vertices.
Let $\eps$ be the smallest weight among edges when both graphs are independently scaled such that the largest edge weight is $1$.

We will connect the two parts by a single edge.
To that end, we select a vertex $v$ from $\abd_{(1-\ratio)N}$ such that the fixation probability starting from $v$ in $\abd_{(1-\ratio)N}$ in dB-process is at least $\frac{1}{N}$, (such vertex exists from \Cref{lem:goodStart}).
Similarly, we select a vertex $v'$ from $\adb_{\ratio N}$ such that the fixation probability starting from $v'$ in the graph under Bd-process is at least $\frac{1}{N}$, (existence follows from \Cref{lem:goodStart}).
Then, we connect $v$ and $v'$ by an edge of weight $w= \frac{\eps^3}{N^9}$.

Finally, we scale all edges in the first part $\abd_{(1-\ratio)N}$ by a factor of $\frac{\eps}{N^3}$. That is, the heaviest edge in $\adb_{\ratio N}$ has weight $1$, and the heaviest edge in $\abd_{(1-\ratio)N}$ has weight $\frac{\eps}{N^3}$.
Observe that the scaling of edges in $\abd_{(1-\ratio)N}$ does not influence the fixation time.

Before we turn to the main proof, we show several properties of the graph we $A_{N,\gamma}$ we have just constructed.
The first property is that the two parts $\abd_{(1-\ratio)N}$ and $\adb_{\ratio N}$ interact so rarely that most of the time they interact, the population on either part is already homogeneous (all mutants or all residents).
Then we show \Cref{lem:bdCross} and \Cref{lem:dbCross}.
The lemmas show that in both processes, the probability of an individual reproducing over the edge between $v$ and $v'$ is unbalanced and in both processes, the individual in the respective amplifier is more likely to spread the the other graph.

\begin{lemma}\label{lem:noCross}
    For any $N$, $\gamma$, both processes, and randomly placed mutant in $A_{N,\gamma}$
    with the probability of at least
    \[
        1-\calO(1/N^2)
    \]
    mutants become extinct or fixate on their part of $A_{N,\gamma}$
    before any reproduction over edge $v,v'$.
\end{lemma}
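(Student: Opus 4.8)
The plan is to use that the connecting edge $e=(v,v')$ carries weight $\w(v,v')=\eps^3/N^9$, which is tiny compared with every other weight in $A_{N,\gamma}$ (edges inside $\adb$ lie in $[\eps,1]$ and, after the final rescaling, edges inside $\abd$ lie in $[\eps^2/N^3,\eps/N^3]$). I would reduce the statement to a product estimate $q\cdot T=\calO(1/N^2)$, where $q$ is a uniform upper bound on the probability that a single step is a reproduction across $e$, and $T=\calO(N^5/\eps)$ is a high-probability upper bound on the number of steps until the part containing the initial mutant becomes homogeneous. Note that the other part starts (and remains) all-resident and is therefore already homogeneous, so only the mutant's part has to resolve.

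First I would bound $q$ uniformly over all configurations and for both updatings. Under Birth-death a crossing happens only if $v$ reproduces onto $v'$, with probability $\tfrac{f(v)}{F}\cdot\tfrac{\w(v,v')}{\deg(v)}\le \tfrac{r}{N}\cdot\tfrac{\eps^3/N^9}{\eps^2/N^3}=\tfrac{r\eps}{N^7}$ (using $F\ge N$ and $\deg(v)\ge\eps^2/N^3$), or if $v'$ reproduces onto $v$, whose probability is at most $\tfrac{r}{N}\cdot\tfrac{\w(v,v')}{\deg(v')}\le \tfrac{r\eps^2}{N^{10}}$ (using $\deg(v')\ge\eps$). Under death-Birth the two symmetric crossing events are bounded identically, now using that the competing weighted-fitness sums $\sum_{u}f(u)\w(u,v)$ and $\sum_{u}f(u)\w(u,v')$ are at least $\eps^2/N^3$ and $\eps$, respectively. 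In all four cases the bound is $\calO(\eps/N^7)$, so $q=\calO(\eps/N^7)$.

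Next I would bound $T$. The key point is that, until $e$ is first used, the configurations on $\abd$ and on $\adb$ evolve as two independent standalone processes: conditioning a step to stay within one part reproduces exactly that part's standalone transition probabilities, since rescaling all edges of a part cancels in the relative reproduction probabilities and $e$ contributes only a vanishing term to the normalisation. Each part has $\Theta(N)$ vertices and internal edge-weight ratio $1/\eps$, so by \Cref{lem:fast} the standalone process on the mutant's part reaches homogeneity within $\calO(N^5/\eps)$ of its own steps with probability $1-\calO(1/2^N)$. Because that part contains a constant fraction $\Theta(1)$ of all vertices, in every step it is selected (for reproduction under Bd, or for death under dB) with probability bounded below by a positive constant depending only on $\gamma$; a standard concentration bound then shows that after $T=\calO(N^5/\eps)$ total steps it has accumulated the required number of internal steps with probability $1-\calO(1/2^N)$.

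Finally I would combine the estimates. Let $E_1$ be the event that $e$ is used within the first $T$ steps, and $E_2$ the event that the mutant's part is not homogeneous within $T$ steps; if $e$ is used before homogeneity then $E_1\cup E_2$ occurs. A union bound gives $\Pr[E_1]\le T\cdot q=\calO(N^5/\eps)\cdot\calO(\eps/N^7)=\calO(1/N^2)$, while on $\neg E_1$ the process coincides with the decoupled one, so $\Pr[E_2\cap\neg E_1]=\calO(1/2^N)$ by the previous paragraph. Hence $\Pr[E_1\cup E_2]\le 2\Pr[E_1]+\Pr[E_2\cap\neg E_1]=\calO(1/N^2)$, which is the claim. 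The main obstacle is the bound on $T$: every step—including the ``wasted'' ones spent on the already-homogeneous part—still carries a chance of using $e$, so $T$ must be pinned down up to a constant factor (not merely up to a polynomial factor), which is precisely what the constant lower bound on the per-step probability of advancing the mutant's part provides.
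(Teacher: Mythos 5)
Your proof is correct and takes essentially the same route as the paper's: the same four per-step crossing bounds ($\calO(r\eps/N^7)$ from $v$ and $\calO(r\eps^2/N^{10})$ from $v'$, under both updatings), the same $\calO(N^5/\eps)$-step bound via \Cref{lem:fast}, and the same union bound giving $\calO(1/N^2)$. If anything you are more careful than the paper, which invokes \Cref{lem:fast} without noting that it must be applied to the mutant's part as a standalone process (the full graph $A_{N,\gamma}$ has weight ratio $N^9/\eps^3$, not $1/\eps$), whereas you spell out the decoupling before the first crossing and the conversion from global steps to steps internal to that part.
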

\begin{proof}
    First, we bound the probability that edge $v,v'$ is selected in both processes and then we use the union bound.

    For Bd, the edge $v,v'$ is used either by (i)~selecting the individual at $v$ and spreading over $v,v'$, or 
    (ii)~selecting the individual at $v'$ and spreading over $v',v$.
    Event (i) happens with probability at most $\frac{r}{N+(r-1)} \cdot \frac{\eps^3/N^9}{\eps^3/N^9+ \eps^2/N^3}< \frac{r\eps}{N^7}$.
    Event (ii) happens with probability at most $\frac{r}{N+(r-1)} \cdot \frac{\eps^3/N^9}{\eps^3/N^9+ \eps}< \frac{r\eps^2}{N^{10}}$.
    The sum of these probabilities is at most $\frac{2\eps}{N^7}$.

    For dB, the edge $v,v'$ is used either if (i)~individual at $v$ dies and is replaced individual at $v'$, or
    (ii)~individual at $v'$ dies and is replaced by individual at $v$.
    Event (i) happens with probability at most $\frac{1}{N} \cdot \frac{r\eps^3/N^9}{\eps^3/N^9+ \eps^2/N^3}< \frac{r\eps}{N^7}$.
    Event (ii) happens with probability at most $\frac{1}{N} \cdot \frac{r\eps^3/N^9}{\eps^3/N^9+ \eps}< \frac{r\eps^2}{N^{10}}$.
    The sum of these probabilities is at most $\frac{2\eps}{N^7}$.

    From \Cref{lem:fast}, we know that with high probability the process ends in $\calO(N^5/\eps)$ steps.
    In every step the probability of using edge $v,v'$ is at most $\frac{2\eps}{N^7}$, that gives probability of using
    $v,v'$ at most $\calO(\frac{1}{N^2})$ at first $N^5/\eps$ steps from union bound.
    Since the probability that the process does not end during these steps is also in $\calO(\frac{1}{N^2})$,
    we have that the randomly placed mutant resolves on one part of the graph before using edge $v,v'$ with a probability at least $1-\calO(\frac{1}{N^2})$
\end{proof}

\begin{lemma}\label{lem:bdCross}
    In the graph $A_{N,\gamma}$ under the Bd process, if edge $v,v'$ is used, then with probability at least $1-\frac{r^2}{N^2}$
    occupant of $v$ spreads to $v'$.
\end{lemma}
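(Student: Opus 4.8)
The plan is to condition on the single reproduction step in which the edge $(v,v')$ is used and to compare the two ways this can happen under Bd updating. In any step of the Moran Birth-death process the edge $(v,v')$ is used in exactly one of two mutually exclusive ways: either (i) the occupant of $v$ is chosen to reproduce and its offspring migrates along $(v,v')$ to replace the occupant of $v'$, which is exactly the event ``occupant of $v$ spreads to $v'$''; or (ii) the occupant of $v'$ reproduces and its offspring migrates to $v$. Writing $F$ for the total fitness of the current configuration, the single-step probabilities are $P_i = \frac{f(v)}{F}\cdot \frac{\w(v,v')}{\deg(v)}$ and $P_{ii} = \frac{f(v')}{F}\cdot \frac{\w(v,v')}{\deg(v')}$. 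Hence, conditioned on the edge being used, the probability that $v$ spreads to $v'$ is precisely $P_i/(P_i+P_{ii})$, and it suffices to bound this quantity below, uniformly over every configuration that can occur.

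The key observation is that in the ratio $P_{ii}/P_i$ both the total fitness $F$ and the connecting weight $\w(v,v')$ cancel, leaving $\frac{P_{ii}}{P_i} = \frac{f(v')}{f(v)}\cdot \frac{\deg(v)}{\deg(v')}$. The fitness factor is at most $r$, the extreme case being a mutant at $v'$ and a resident at $v$. It then remains to estimate the degree ratio using the edge-weight scaling from the construction. By construction every edge inside the chunk $\abd_{(1-\gamma)N}$ has weight at most $\eps/N^3$, and $v$ has strictly fewer than $N$ neighbors inside that chunk, so $\deg(v)\le (N-1)\tfrac{\eps}{N^3} + \tfrac{\eps^3}{N^9}$, the last term being the (negligible) connecting weight. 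On the other hand $v'$ lies in $\adb_{\gamma N}$, whose heaviest edge has weight $1$ and all of whose edges have weight at least $\eps$, so $\deg(v')\ge \eps$. Dividing gives $\frac{\deg(v)}{\deg(v')}\le \frac{N-1}{N^3}+\frac{\eps^2}{N^9} < \frac1{N^2}$.

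Combining the two bounds yields $\frac{P_{ii}}{P_i} < \frac{r}{N^2}\le \frac{r^2}{N^2}$, where the last step just uses $r\ge 1$. Finally, since $\frac1{1+x}\ge 1-x$ for $x\ge 0$, I conclude $\frac{P_i}{P_i+P_{ii}} = \frac{1}{1+P_{ii}/P_i}\ge 1-\frac{P_{ii}}{P_i} > 1-\frac{r^2}{N^2}$, which is exactly the claim. I expect the only delicate point to be the degree estimate: one must use both the global down-scaling of $\abd$ by $\eps/N^3$ and the fact that the chunk has strictly fewer than $N$ vertices, so that $\deg(v)/\deg(v')$ stays strictly below $1/N^2$ rather than merely $O(1/N^2)$ — this is what lets the clean factor $r^2$ survive (rather than a larger constant) after multiplying by the fitness factor $r$. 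Conceptually, the smallness of $\deg(v)$ together with the order-$\eps$ size of $\deg(v')$ is precisely the asymmetry described in the main text, namely that under Bd updating the flow along $e$ points overwhelmingly from $v$ into the $\adb$ chunk.
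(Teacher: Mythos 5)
Your proposal is correct and takes essentially the same approach as the paper's own proof: both condition on the single step in which the edge $(v,v')$ is used, compare the two directional single-step probabilities, and exploit the weight asymmetry from the construction ($\deg(v)\le (N-1)\eps/N^3+\eps^3/N^9$ for $v$ in the down-scaled chunk $\abd$ versus $\deg(v')\ge\eps$ for $v'$ in $\adb$) together with the fitness ratio bound $\f(v')/\f(v)\le r$. The only difference is bookkeeping: the paper separately lower-bounds the $v\to v'$ probability by $\eps^2/(rN^8)$ and upper-bounds the $v'\to v$ probability by $r\eps^2/N^{10}$ (carrying the total-fitness factors $1/(rN)$ and $r/(N+r-1)$ through), whereas you cancel $F$ and $\w(v,v')$ in the ratio $P_{ii}/P_i$ first, which gives the marginally sharper intermediate bound $r/N^2\le r^2/N^2$ with cleaner constants.
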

\begin{proof}
    At one step, individual at $v$ spreads to $v'$ with probability at least 
    $\frac{1}{rN} \cdot \frac{\eps^3/N^9}{\eps^3/N^9+ (N-1)\cdot \eps/N^3} > \frac{\eps^2}{rN^8}$.
    Individual at $v'$ spreads to $v$ with probability at most
    $\frac{r}{N+(r-1)} \cdot \frac{\eps^3/N^9}{\eps^3/N^9+ \eps}< \frac{r\eps^2}{N^{10}}$.
    Conditioned that the spread over $v,v'$ happens, it is from $v'$ to $v$ with probability at most
    \[
        \frac{\frac{r\eps^2}{N^{10}}}{\frac{\eps^2}{rN^8} + \frac{r\eps^2}{N^{10}}} < \frac{r^2}{N^2}\,.
    \]
    The opposite event, $v$ spreading to $v'$ happens with probability at least $1-\frac{r^2}{N^2}$.
\end{proof}

\begin{lemma}\label{lem:dbCross}
    In the graph $A_{N,\gamma}$ under the dB process, if edge $v,v'$ is used, then with probability at least $1-\frac{r^2}{N^2}$
    occupant of $v'$ spreads to $v$.
\end{lemma}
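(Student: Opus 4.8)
The plan is to mirror the proof of \Cref{lem:bdCross}, exploiting the fact that switching from Birth-death to death-Birth dynamics reverses the favored direction of flow across the connecting edge $v,v'$. Concretely, I would bound, over a single step, the two probabilities (a) that the occupant of $v'$ reproduces into $v$ (that is, $v$ dies and is refilled by $v'$) and (b) that the occupant of $v$ reproduces into $v'$ (that is, $v'$ dies and is refilled by $v$), and then condition on the edge $v,v'$ being used. Recall from the construction that $\w(v,v')=\eps^3/N^9$, that $\deg(v)\le \eps^3/N^9+(N-1)\cdot\eps/N^3$ (the heaviest edge inside $\abd_{(1-\ratio)N}$ is scaled to $\eps/N^3$), and that $\deg(v')\ge \eps$ (the heaviest edge inside $\adb_{\ratio N}$ is scaled to $1$ and the lightest to $\eps$).

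First I would lower-bound event (a). In the dB process $v$ dies with probability $1/N$, after which it is refilled by $v'$ with probability $\f(v')\,\w(v,v')\big/\sum_{u\colon(u,v)\in E} \f(u)\,\w(u,v)$. Using $\f(v')\ge 1$ in the numerator and $\sum_u \f(u)\w(u,v)\le r\,\deg(v)$ in the denominator, the probability of event (a) is at least $\frac{1}{rN}\cdot\frac{\eps^3/N^9}{\eps^3/N^9+(N-1)\eps/N^3}>\frac{\eps^2}{rN^8}$. Next I would upper-bound event (b): here $v'$ dies with probability $1/N$ and is refilled by $v$ with probability $\f(v)\,\w(v,v')\big/\sum_{u\colon(u,v')\in E}\f(u)\,\w(u,v')$; using $\f(v)\le r$ in the numerator and $\sum_u\f(u)\w(u,v')\ge \deg(v')\ge \eps$ in the denominator gives an upper bound $\frac{1}{N}\cdot\frac{r\,\eps^3/N^9}{\eps^3/N^9+\eps}<\frac{r\eps^2}{N^{10}}$.

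Conditioning on the edge $v,v'$ being used, the probability that it is traversed in the ``wrong'' direction (event (b)) is at most the ratio of the bound for (b) to the sum of both bounds, namely
\[
\frac{r\eps^2/N^{10}}{\eps^2/(rN^8)+r\eps^2/N^{10}}=\frac{r^2}{N^2+r^2}<\frac{r^2}{N^2},
\]
so the occupant of $v'$ spreads to $v$ with probability at least $1-r^2/N^2$, as claimed. The only genuinely delicate point — and the step I expect to be the real obstacle — is keeping the placement of the factors of $r$ and of $\deg(v)$ versus $\deg(v')$ consistent with the fact that in dB the death step is fitness-independent while the birth step is fitness-weighted. This is precisely the bookkeeping that differs from \Cref{lem:bdCross}, and it is what makes the huge asymmetry between $\eps^3/N^9$ and the in-part edge weights survive in the opposite direction, yielding the reversed flow.
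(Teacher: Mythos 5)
Your proof is correct and takes essentially the same route as the paper's: a per-step lower bound of $\eps^2/(rN^8)$ on the $v'\to v$ event, a per-step upper bound of $r\eps^2/N^{10}$ on the $v\to v'$ event, and then the conditional bound $\frac{r^2}{N^2+r^2}<\frac{r^2}{N^2}$. Your explicit fitness bookkeeping is in fact slightly cleaner than the paper's writeup, which contains a copy-paste slip (both directional events are labelled ``individual at $v'$ spreads to $v$'') and misplaces a factor of $r$ in the final conditional ratio, though neither affects the validity of the bound.
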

\begin{proof}
    At one step, individual at $v'$ spreads to $v$ with probability at least 
    $\frac{1}{N} \cdot \frac{\eps^3/N^9}{\eps^3/N^9+ r(N-1)\cdot \eps/N^3} > \frac{\eps^2}{rN^8}$.
    Individual at $v'$ spreads to $v$ with probability at most
    $\frac{1}{N} \cdot \frac{r\eps^3/N^9}{r\eps^3/N^9+ \eps}< \frac{r\eps^2}{N^{10}}$.
    Conditioned that the spread over $v,v'$ happens, it is from $v$ to $v'$ with probability at most
    \[
        \frac{\frac{r\eps^2}{N^{10}}}{\frac{r\eps^2}{N^8} + \frac{r\eps^2}{N^{10}}} < \frac{r^2}{N^2}\,.
    \]
    The opposite event, $v$ spreading to $v'$ happens with probability at least $1-\frac{r^2}{N^2}$.
\end{proof}

\subsection{Proof of Amplification}

\begin{lemma}[Amplification under Bd]\label{lem:Bd_fix}
    For every $r$ and Bd updating, the fixation probability on $A_{N,\gamma}$ is at least
    \[
        1-\ratio - \calO(N^{-1/3})\,.
    \]
\end{lemma}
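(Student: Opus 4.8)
The plan is to lower-bound $\fp(A_{N,\gamma})$ by following only the single most favorable scenario and discarding everything else: the initial mutant appears in the Bd-amplifier chunk $\Abd_{(1-\gamma)N}$, fixates on that chunk, and then sweeps across the connecting edge to conquer $\Adb_{\gamma N}$. Since we only want a lower bound, I would simply assign probability $0$ to the event that the mutant starts in $\Adb_{\gamma N}$. For the starting location, note that under uniform initialization the mutant lands in $\Abd_{(1-\gamma)N}$ with probability exactly $1-\gamma$ (that chunk holds $(1-\gamma)N$ of the $N$ vertices), and conditioned on this its position is uniform within $\Abd_{(1-\gamma)N}$.

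Next I would show that, conditioned on starting in $\Abd_{(1-\gamma)N}$, the mutant reaches the \emph{half-done} configuration (all of $\Abd$ mutant, all of $\Adb$ resident) with probability $1-\calO(N^{-1/3})$. By \Cref{lem:noCross} the connecting edge is, with probability $1-\calO(1/N^2)$, not used before one chunk resolves; until then $\Adb$ stays homogeneously resident and the dynamics restricted to $\Abd_{(1-\gamma)N}$ coincide with the isolated Bd process on it (the uniform scaling of all its edges by $\eps/N^3$ is irrelevant for fixation probability). Hence, conditioned on this clean event, the chance of fixating rather than dying out on $\Abd$ is exactly the isolated value, which by \Cref{thm:bdamp} is $\ge 1-\calO(((1-\gamma)N)^{-1/3}) = 1-\calO(N^{-1/3})$ for fixed $\gamma$.

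From the half-done configuration I would set up a three-state absorbing chain on $\{H,\ \mathrm{Fix},\ \mathrm{Ext}\}$, where $H$ is half-done, and analyze it per firing of the connecting edge. By \Cref{lem:bdCross} a mutant pushes from $v$ to $v'$ with probability $\ge 1-r^2/N^2$, and then, by the choice of $v'$ via \Cref{lem:goodStart}, it fixates on $\Adb_{\gamma N}$ under Bd with probability $\ge 1/N$ (sending us to $\mathrm{Fix}$); otherwise $\Adb$ falls back to all-residents and we return to $H$. In the complementary backflow case a resident enters the all-mutant amplifier $\Abd$ and could cause extinction, but only with probability $\le r^2/N^2$ per firing, and I only need to cap its takeover chance crudely (residents are the disadvantaged type in an amplifier). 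Comparing a success rate $s \ge (1-r^2/N^2)/N$ against an extinction rate $x \le r^2/N^2$, a gambler's-ruin computation gives $\Pr[\mathrm{Fix}\mid H] = s/(s+x) \ge 1-\calO(1/N)$. Multiplying the three stages then yields
\[
\fp(A_{N,\gamma}) \;\ge\; (1-\gamma)\,\bigl(1-\calO(N^{-1/3})\bigr)\bigl(1-\calO(1/N)\bigr) \;\ge\; 1-\gamma-\calO(N^{-1/3}).
\]

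The hard part will be making the third stage fully rigorous. I must check that after every failed invasion the configuration returns \emph{cleanly} to $H$ so that the three-state chain is legitimate, and, crucially, control the accumulation of the $\calO(1/N^2)$ ``messy crossing'' error (from \Cref{lem:noCross} and \Cref{lem:fast}) over the $\Theta(N)$ expected firings before absorption, ensuring these lower-order contributions do not overwhelm the $\calO(1/N)$ main term. The other delicate point is the crude bound on the resident's takeover probability inside $\Abd$, which should follow from the amplifier suppressing the disadvantaged type, but must be argued carefully enough to keep $x=\calO(1/N^2)$.
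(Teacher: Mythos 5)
Your proposal matches the paper's proof essentially step for step: the same three-factor bound $(1-\gamma)\bigl(1-\calO(N^{-2})\bigr)\bigl(1-\calO(N^{-1/3})\bigr)$ for conquering $\Abd_{(1-\gamma)N}$ (uniform landing, \Cref{lem:noCross}, \Cref{thm:bdamp}), followed by the same per-firing retry argument from the half-done state, comparing success probability $\approx 1/N$ against failure probability $\calO(N^{-2})$ to get $1-\calO(1/N)$. The one ``delicate point'' you flag --- carefully bounding the resident's takeover probability inside $\Abd$ after a backflow event --- is unnecessary: the paper simply declares every wrong-direction crossing a failure (i.e.\ caps that takeover probability by $1$), which is exactly what your bound $x\le r^2/N^2$ already does, and the renewal structure of the retry chain automatically prevents the per-round $\calO(N^{-2})$ errors from accumulating.
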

\begin{proof}
    For Bd, we first bound the probability that mutants conquer $\abd_{(1-\ratio)N}$. For this to happen, it suffices if:
    \begin{enumerate}
        \item The initial mutant appears at the correct part of the graph: $\abd_{(1-\ratio)N}$.
        \item In the next $N^5/\eps$ steps, the process ends in $\abd_{(1-\ratio)N}$ without edge $v,v'$ being used.
        \item Mutants conquer $\abd_{(1-\ratio)N}$ within $N^5/\eps$ steps.
    \end{enumerate}

    The first condition is fulfilled with probability $1-\ratio$ since the initialization is uniformly random.
    The process resolves on $\abd_{(1-\ratio)N}$ without edges $v,v'$
    interference with probability $1-\calO(\frac{1}{N^2})$, from \Cref{lem:noCross}.
    If the process is finished, the mutants spread with probability at least $1 - \calO(N^{-1/3})$, from \Cref{thm:bdamp}.

    Putting these probabilities together gives a probability at least
    \[
        \left(1-\ratio\right) \cdot \left(1-\calO\left(N^{-2}\right)\right) \cdot \left( 1 - \calO(N^{-1/3}) \right) > 1-\ratio - \calO(N^{-1/3})
    \]
    that the mutants conquer $\abd_{(1-\ratio)N}$.
    
    After the graph $\abd_{(1-\ratio)N}$ is occupied by mutants, we bound from below the probability that the mutants fixate in the rest of the graph.
    We wait until the edge $(v,v')$ is used for reproduction, in one of its two directions. For fixation on the whole graph to occur, it suffices if:
    \begin{enumerate}
        \item The edge $v,v'$ was used in the right direction (from $v$ to $v'$).
        \item In the next $N^5/\eps$ steps, edge $v,v'$ is not used for reproduction.
        \item Mutants fixate on $\adb_{\ratio N}$ within $N^5/\eps$ steps.
    \end{enumerate}

    The first condition happens with probability at least $1-\frac{r^2}{N^2}$, from \Cref{lem:bdCross}.
    The edge $v,v'$ is not used within $N^5/\eps$ steps with probability at least $1-\calO(N^{-2})$, again from \Cref{lem:noCross}.
    If both of those occur, the mutants fixate with probability at least $\frac{1}{N}$, from \Cref{lem:goodStart} and since the process finishes within $N^5/\eps$
    steps with probability at least $1-2^{-N}$, by Union Bound the fixation probability is at least $\frac{1}{N} - 2^{-N}$.
    
    This gives the probability at least 
    \[
        \left(1-\frac{r^2}{N^2}\right) \cdot \left(1-\calO(N^{-2})\right) \cdot \left(\frac{1}{N} - 2^{-N}\right)> \frac{1}{N} - \frac{1}{N^2}
    \]
    that if the edge $(v,v')$ is used, the process finishes with mutant fixation on the whole graph without edge $(v,v')$ being used again.

    In contrast, if condition 1. fails, that is, the edge $(v,v')$ is instead used in the wrong direction (from $v'$ to $v$), we declare a failure (even though some of those evolutionary trajectories might eventually lead to mutant fixation).
    By \Cref{lem:bdCross}, this happens with probability at most $\frac{r^2}{N^2}$.
    Similarly, we declare a failure if condition 2. fails, that is, when the edge $(v,v')$ is used (in either direction) during the $N^5/\eps$ steps, potentially interrupting the process.
    Note that when condition 3. fails, that is, the mutants do not fixate in $\adb_{\ratio N}$ (but the edge $v,v'$ is not used),
    we are in the same state as before, where we can compute the fixation versus failure probability.

    The failure probability is in $\calO(N^{-2})$, the immediate fixation probability is at least $\frac{1}{N}-\frac{1}{N^2}$, otherwise, we can retry.
    This gives the fixation probability at least 
    \[
    \frac{ \frac1N-\frac1{N^2}}{ \calO(N^{-2}) +\left(\frac1N-\frac1{N^2}\right)}
    =1- \calO\left(\frac{1}{N}\right).
    \]

    Overall, the fixation probability of a randomly placed mutant on $A_{N,\ratio}$ is thus at least
    \[
        \left(1-\ratio - \calO(N^{-1/3})\right) \cdot \left(1- \calO\left(\frac{1}{N}\right)\right) \ge 1-\ratio - \calO(N^{-1/3}).
    \]

\end{proof}

\begin{lemma}[Amplification under dB]\label{lem:dB_fix}
    For every $r$ and dB updating, the fixation probability on $A_{N,\gamma}$ is at least
    \[
        \left(\frac{1}{2}\ratio (1 - r^{-3}) - \calO(N^{-1})\right)\,.
    \]
\end{lemma}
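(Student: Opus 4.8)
The plan is to mirror the proof of \Cref{lem:Bd_fix} almost verbatim, swapping the roles of the two chunks and of the two endpoints $v,v'$ of the connecting edge. Under dB-updating it is the chunk $\adb_{\ratio N}$ that favors the mutant, and by \Cref{lem:dbCross} the flow along $\{v,v'\}$ now runs predominantly from $v'$ into $v$ rather than the other way around. Accordingly, $\adb_{\ratio N}$ plays the role of the ``favored half'' and $\abd_{(1-\ratio)N}$ the role of the ``downstream half.''

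First I would lower-bound the probability that the mutants conquer $\adb_{\ratio N}$ (become ``half done''). This needs three events: (i) the initial mutant lands in $\adb_{\ratio N}$, which has probability exactly $\ratio$ under uniform initialization; (ii) within the first $\calO(N^5/\eps)$ steps the process resolves inside $\adb_{\ratio N}$ without the edge $\{v,v'\}$ being used, which holds with probability $1-\calO(N^{-2})$ by \Cref{lem:noCross}; and (iii) given that the mutant is placed uniformly inside the chunk -- which is exactly the uniform initialization of \Cref{thm:dbamp} -- the mutants fixate there, with probability $\fpd(\adb_{\ratio N})$. Writing the parameter of \Cref{thm:dbamp} as $n=\Theta(\ratio N)$, the factor $\tfrac{n}{2n+1}=\tfrac12-\calO(N^{-1})$ while $\tfrac{1}{1-r^{-3n}}=1+\calO(r^{-3n})$ is exponentially close to $1$ because $r>1$, so
\[
\fpd(\adb_{\ratio N}) = \tfrac12\bigl(1-r^{-3}\bigr)-\calO(N^{-1}).
\]
Multiplying the three probabilities shows that the mutants become half done with probability at least $\tfrac12\ratio(1-r^{-3})-\calO(N^{-1})$.

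Next I would show that, once half done, the mutants fixate with probability $1-\calO(N^{-1})$, by the same retry argument used in \Cref{lem:Bd_fix}. We wait until $\{v,v'\}$ is used; by \Cref{lem:dbCross} it is used in the favorable direction (from $v'$ to $v$) with probability at least $1-r^2/N^2$, by \Cref{lem:noCross} it is then not used again for the next $\calO(N^5/\eps)$ steps with probability $1-\calO(N^{-2})$, and within that window the mutant freshly seeded at the good dB-start vertex $v$ fixates on $\abd_{(1-\ratio)N}$ with probability at least $\tfrac1N-2^{-N}$ by \Cref{lem:goodStart} and \Cref{lem:fast}. Declaring every unfavorable or interrupted edge use a failure (total probability $\calO(N^{-2})$) and noting that every round which neither fixates nor fails returns to the half-done state, the conditional fixation probability is at least $\bigl(\tfrac1N-\tfrac1{N^2}\bigr)/\bigl(\calO(N^{-2})+\tfrac1N-\tfrac1{N^2}\bigr)=1-\calO(N^{-1})$. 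Hence the overall fixation probability is at least
\[
\Bigl(\tfrac12\ratio(1-r^{-3})-\calO(N^{-1})\Bigr)\cdot\bigl(1-\calO(N^{-1})\bigr)\ge \tfrac12\ratio(1-r^{-3})-\calO(N^{-1}).
\]

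The hard part is purely the bookkeeping of error terms: I must check that the several multiplicative factors $1-\calO(N^{-2})$ and $1-\calO(N^{-1})$, together with the $\calO(N^{-1})$ slack from \Cref{thm:dbamp}, collapse into a single additive $\calO(N^{-1})$ without any constant hidden in the $\calO$ blowing up as $r\to 1^{+}$ or $r\to 1.2^{-}$; keeping $r$ in a fixed bounded interval is what makes this safe. The only genuinely new ingredient relative to \Cref{lem:Bd_fix} is that the edge flow reverses direction under dB-updating, and this is already supplied by \Cref{lem:dbCross}, so no new estimate on the crossing edge is required.
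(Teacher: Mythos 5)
Your proposal is correct and takes essentially the same approach as the paper's proof: the same two-phase decomposition (first conquering $\adb_{\ratio N}$ with probability $\ratio\cdot(1-\calO(N^{-2}))\cdot\bigl(\tfrac12(1-r^{-3})-\calO(N^{-1})\bigr)$, then the retry argument for invading $\abd_{(1-\ratio)N}$ through the crossing edge), using the same ingredients (\Cref{lem:noCross}, \Cref{lem:dbCross}, \Cref{lem:goodStart}, \Cref{lem:fast}, and \Cref{thm:dbamp}) with matching error bookkeeping.
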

\begin{proof}
    For dB, we proceed similarly as in the previous lemma.
    First, we again bound the probability that mutants conquer $\adb_{\ratio N}$.
    For this to happen, it suffices if:
    \begin{enumerate}
        \item The initial mutant appears at the correct part of the graph: $\adb_{\ratio N}$.
        \item In the next $N^5/\eps$ steps, the process ends in $\adb_{\ratio N}$ without edge $v,v'$ being used.
        \item Mutants conquer $\adb_{\ratio N}$ within $N^5/\eps$ steps.
    \end{enumerate}

    The first condition is fulfilled with probability $\ratio$ since the initialization is uniformly random.
    The process resolves on $\adb_{\ratio N}$ without edges $v,v'$
    interference with probability $1-\calO(\frac{1}{N^2})$, from \Cref{lem:noCross}.
    If the process has resolved on $\adb_{\ratio N}$, the mutants conquer it with probability at least $\frac{N}{2N+1} \cdot \frac{1 - \frac{1}{r^3}}{1 - \frac{1}{r^{3N}}} - \calO(2^{-N})$, from \Cref{thm:dbamp} and Union Bound.

    Putting these probabilities together gives a probability at least 
    \[
        \ratio \cdot \left(1-\calO\left(N^{-2}\right)\right) \cdot \left( \frac{N}{2N+1} \cdot \frac{1 - \frac{1}{r^3}}{1 - \frac{1}{r^{3N}}}  - \calO(2^{-N})\right) > \frac{1}{2}\ratio (1 - r^{-3}) - \calO(N^{-1})
    \]
    that the mutants conquer $\adb_{\ratio N}$.
    
    After the graph $\adb_{\ratio N}$ is occupied by mutants, we bound the probability that the mutants fixate in the rest of the graph.
    Conditioned on the fact that the edge $v,v'$ is used, it happens when
    \begin{enumerate}
        \item The edge $v,v'$ was used in the right direction (from $v'$ to $v$).
        \item In the next $N^5/\eps$ steps, edge $v,v'$ is not used.
        \item Mutants fixate on $\abd_{(1-\ratio) N}$ within $N^5/\eps$ steps.
    \end{enumerate}

    If the edge $v,v'$ is used in the wrong direction, we call it a fail,
    this happens with probability at most $\frac{r^2}{N^2}$, from \Cref{lem:bdCross} (if the edge is used).
    
    The first condition happens with probability at least $1-\frac{r^2}{N^2}$, from \Cref{lem:dbCross}.
    The edge $v,v'$ is not used within $N^5/\eps$ steps with probability at least $1-\calO(N^{-2})$, again from \Cref{lem:noCross}.
    The mutants fixate with probability at least $\frac{1}{N}$, from \Cref{lem:goodStart} and since process finishes within $N^5/\eps$
    steps with probability at least $1-2^{-N}$, the fixation is at least $\frac{1}{N} - 2^{-N}$.
    
    This gives the probability at least 
    \[
        \left(1-\frac{r^2}{N^2}\right) \cdot \left(1-\calO(N^{-2})\right) \cdot \left(\frac{1}{N} - 2^{-N}\right)> \frac{1}{N} - \frac{1}{N^2}
    \]
    that if the edge $v,v'$ is used, the process finishes without edge $v,v'$ being used again.

    However, when the mutants do not fixate in $\adb_{\ratio N}$ (but the edge $v,v'$ is not used),
    we are in the same state as before, where we can compute the fixation versus fail probability.

    The fail probability is in $\calO(N^{-2})$, the immediate fixation probability is at least $\frac{1}{N}-\frac{1}{N^2}$,
    otherwise, we can retry.
    This gives total fixation $1- \calO(\frac{1}{N})$.

    Overall, the fixation probability of a randomly placed mutant on $A_{N,r}$ is at least
    \[
        \left(\frac{1}{2}\ratio (1 - r^{-3}) - \calO(N^{-1})\right) \cdot \left(1- \calO(\frac{1}{N})\right) \ge  \left(\frac{1}{2}\ratio (1 - r^{-3}) - \calO(N^{-1})\right) \,.
    \]
\end{proof}

The following theorem shows that for a particular $\gamma$, our construction is an amplifier for both processes for $r \in (1,1.2)$.

\begin{theorem}[Simultaneous Bd- and dB-amplifier]
\label{thm:positive}
    For every large enough population size $N$, graph $A_{N,\gamma}$ for graph $A_{N,\ratio}$, where $\ratio = \frac{2\cdot 1.2^3}{3\cdot 1.2^3-1} = 0.826004$ we have
    \[
    \fp(A_{N,\ratio})>\fp(K_N)
    \quad\text{and}\quad 
    \fpd(A_{N,\ratio})>\fpd(K_N)
    \]
    for every $r\in(1,1.2)$.
\end{theorem}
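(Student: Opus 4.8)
The plan is to substitute the balanced value $\gamma=\frac{2\cdot 1.2^3}{3\cdot 1.2^3-1}\approx 0.826$ into the lower bounds from \cref{lem:Bd_fix} and \cref{lem:dB_fix}, and to compare each against the explicit complete-graph baselines from the Model section, namely $\fp(K_N)=\frac{1-1/r}{1-1/r^N}$ and $\fpd(K_N)=\frac{N-1}{N}\cdot\frac{1-1/r}{1-1/r^{N-1}}$. Each of the two target inequalities then reduces to a single scalar comparison. First I would explain the choice of $\gamma$: it is precisely the value at which the two lower bounds coincide at the right endpoint $r=1.2$, i.e.\ the solution of $1-\gamma=\tfrac12\gamma\,(1-1.2^{-3})$. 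Their common value there is $1-\gamma\approx 0.174$, which strictly exceeds $1-\tfrac1{1.2}=\tfrac16$, the $N\to\infty$ limit of both baselines at $r=1.2$. This slack of roughly $0.007$ at the worst point is what drives the whole argument.

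For the Birth-death inequality, since $\abd$ is a superamplifier its lower bound $1-\gamma-\calO(N^{-1/3})$ is asymptotically constant in $r$, whereas $\fp(K_N)$ is increasing in $r$ and bounded above by its value at $r=1.2$, which tends to $\tfrac16$. Thus the comparison reduces to $1-\gamma>1-\tfrac1r$, i.e.\ $\gamma<\tfrac1r$; the tightest case $r\to 1.2$ requires $\gamma<\tfrac56$, which holds. For the death-Birth inequality I would factor $1-r^{-3}=(1-\tfrac1r)(1+\tfrac1r+\tfrac1{r^2})$ to rewrite the target $\tfrac12\gamma(1-r^{-3})>1-\tfrac1r$ as $\gamma>g(r):=\frac{2r^2}{r^2+r+1}$. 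Since $g$ is increasing on $[1,1.2]$ with $g(1)=\tfrac23$ and $g(1.2)\approx 0.791$, the bound $\gamma\approx 0.826>g(r)$ holds across the interval with a uniform positive margin.

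These two comparisons already establish the theorem for every $r$ bounded away from the neutral point, since there the additive $\calO(N^{-1})$ slack in \cref{lem:dB_fix}, the error term in \cref{lem:Bd_fix}, and the finite-$N$ deviations of the baselines from their limits $1-\tfrac1r$ all become negligible against the fixed positive gaps $\tfrac1r-\gamma$ and $\tfrac12(1-r^{-3})(\gamma-g(r))$. The step I expect to be the genuine obstacle is producing a single threshold $N_0$ that works uniformly as $r\to 1^+$. At $r=1$ all four quantities coincide: \cref{lem-neutral} gives $\fp(A_{N,\gamma})=\fpd(A_{N,\gamma})=\fp(K_N)=\fpd(K_N)=\tfrac1N$, so the strict inequalities degenerate into equalities. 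Worse, the death-Birth gap behaves like $\tfrac12\gamma(1-r^{-3})-(1-\tfrac1r)\approx(\tfrac32\gamma-1)(r-1)$, which shrinks to $0$ linearly as $r\to 1$, so once $r-1=\calO(1/N)$ it no longer dominates the $\calO(N^{-1})$ error and the crude bound from \cref{lem:dB_fix} stops being informative.

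To close this neutral sliver I would avoid taking the chunk size to infinity prematurely in \cref{thm:dbamp}, keeping the exact factor $\frac{1-r^{-3}}{1-r^{-3n}}$ (with $n=\Theta(\gamma N)$ the Fan parameter) in place of its limit $1-r^{-3}$; this restores the correct value $\tfrac1N$ at $r=1$ and turns the comparison near $r=1$ into one of the leading-order slopes in $(r-1)$ of two explicit rational functions, whose decisive coefficient is again $\tfrac32\gamma>1$ (equivalently $\gamma>g(1)=\tfrac23$). In short, I would treat $(1,1+\delta]$ by a first-order, derivative-at-$r=1$ argument and $[1+\delta,1.2)$ by the clean limiting inequalities above, then check that the finite-$N$ error terms are uniformly small on each piece and that the analogous slope comparison also holds on the Birth-death side. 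Stitching these two regimes together is the only genuinely delicate bookkeeping in the proof; everything else is the two scalar inequalities $\gamma<\tfrac1r$ and $\gamma>g(r)$ verified above.
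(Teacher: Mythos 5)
Your core argument is the paper's own proof. The paper likewise substitutes $\ratio=\frac{2\cdot 1.2^3}{3\cdot 1.2^3-1}$ into \cref{lem:Bd_fix} and \cref{lem:dB_fix} and compares against the complete-graph formulas: on the Bd side it checks $1-\ratio=0.173996>0.16667\ge\fp(K_N)$, and on the dB side it divides by $(1-r^{-1})$ and bounds $1+r^{-1}+r^{-2}>2.52778$, obtaining $\tfrac12\cdot 0.826004\cdot 2.52778\approx 1.04>1.00001$. These are exactly your two scalar inequalities $\ratio<1/r$ and $\ratio>2r^2/(r^2+r+1)$, just rearranged, and your reading of where $\ratio$ comes from (equalizing the two lemma bounds at the endpoint $r=1.2$) matches the paper's choice.

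The divergence is in the near-neutral regime, and there your proposal is more careful than the paper, though still incomplete. The paper's proof begins by ``setting $N$ so big that $\frac{1}{1-r^{-N+1}}<1.00001$''; that choice of $N$ depends on $r$ and diverges as $r\to 1^+$, and its additive $\calO(N^{-1})$ error is compared against a gap that shrinks like $\bigl(\tfrac32\ratio-1\bigr)(r-1)$. So, read with the stated quantifier order (one $N_0$ valid for all $r\in(1,1.2)$ simultaneously), the paper's proof has precisely the hole you identify; as written it establishes the statement only with the quantifiers swapped. Your diagnosis is correct: by \cref{lem-neutral} all four quantities equal $1/N$ at $r=1$, so near $r=1$ the comparison is one of slopes. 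However, your sketched repair does not close the gap either. Retaining the exact factor $\frac{1-r^{-3}}{1-r^{-3n}}$ from \cref{thm:dbamp} removes only that one source of error; the remaining additive errors in \cref{lem:dB_fix} --- the crossing-failure probabilities from \cref{lem:noCross} and the trajectories written off in the retry argument --- are independent of $r$ and do not vanish at $r=1$, so they still swamp the slope gap once $r-1\lesssim N^{-2}$. Any argument assembled from lower bounds carrying $r$-independent additive errors must fail on some sliver $(1,1+\delta_N)$; a genuine fix requires error bounds that are multiplicative in the selective advantage, or a direct comparison of the derivatives of the exact fixation probabilities at $r=1$, and neither your sketch nor the paper supplies this. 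In short: away from $r=1$ your proof and the paper's coincide; the sliver at $r\to 1^+$ is a real gap that you correctly flagged but did not fully close --- and that the paper shares.
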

\begin{proof}
    We know that $\fp(K_N) = \frac{1-r^{-1}}{1-r^{-N}}$ and $\fpd(K_N) = \frac{N-1}{N}\frac{1-r^{-1}}{1-r^{-N+1}}$.
    Setting $N$ so big that $\frac{1}{1-r^{-N+1}} < 1.00001$, we have that 
    $\fp(K_N) < (1-r^{-1})\cdot 1.00001 < 0.16667$ and $\fpd(K_N) < (1-r^{-1})\cdot 1.00001$.

    From \Cref{lem:Bd_fix}, plugging $\ratio$, we have that the fixation probability is at least $1-0.826004- \calO(N^{-1/3}) = 0.173996- \calO(N^{-1/3})$ for the Birth-death process which is bigger than the maximal fixation probability for $K_N$ ($0.16667$).

    From \Cref{lem:dB_fix}, plugging $\ratio$, we have that the fixation probability is at least $ \left(\frac{1}{2}0.826004 (1 - r^{-3}) - \calO(N^{-1})\right)$ for the death-Birth process.
    We have
    \begin{align*}
        (1-r^{-1})\cdot 1.00001 &< \left(\frac{1}{2}0.826004 (1 - r^{-3}) - \calO(N^{-1})\right)\\
        1.00001 &< \left(\frac{1}{2}0.826004 (1 + r^{-1}+r^{-2})\right) - \calO(N^{-1})\\
        1.00001 &< \left(\frac{1}{2}0.826004 \cdot 2.52778\right) - \calO(N^{-1})\\
        1.00001 &< 1.04 - \calO(N^{-1})\,,\\
    \end{align*}
    which proves the theorem.
\end{proof}

The following theorem shows how to choose $\gamma$ to achieve the best amplification so that the fixation probability of the amplifier is at least $1.04$ times better than the complete graph for both processes.

\begin{theorem}[Optimal Bd- and dB-amplifier]
\label{thm:optimal}
    For any $r\in(1,1.2)$, and every large enough population size $N$, graph $A_{N,\gamma}$ for graph $A_{N,\ratio}$, where $\ratio = \frac{2r^3}{3r^3-1}$ we have
    \[
    \fp(A_{N,\ratio})>X \cdot \fp(K_N)
    \quad\text{and}\quad 
    \fpd(A_{N,\ratio})>X \cdot \fpd(K_N).
    \]
    for $X = 1.04$.
\end{theorem}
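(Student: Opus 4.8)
The plan is to substitute the prescribed split $\ratio=\frac{2r^3}{3r^3-1}$ into the two amplification bounds already in hand and compare against the complete-graph baselines. Recall $\fp(K_N)=\frac{1-r^{-1}}{1-r^{-N}}$ and $\fpd(K_N)=\frac{N-1}{N}\cdot\frac{1-r^{-1}}{1-r^{-N+1}}$, so for every fixed $r>1$ both baselines tend to $1-r^{-1}$ as $N\to\infty$; in particular each equals $(1-r^{-1})(1+o(1))$. On the other side, \Cref{lem:Bd_fix} gives $\fp(A_{N,\ratio})\ge 1-\ratio-\calO(N^{-1/3})$ and \Cref{lem:dB_fix} gives $\fpd(A_{N,\ratio})\ge\frac12\ratio(1-r^{-3})-\calO(N^{-1})$. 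The first lower bound decreases in $\ratio$ while the second increases in $\ratio$, so the best simultaneous guarantee is obtained exactly when the two coincide; this max-min balance is precisely the defining property of the chosen $\ratio$, which is why it is optimal.

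I would first record the algebraic identity that makes the split balanced, namely
\[
1-\ratio\;=\;\tfrac12\ratio\bigl(1-r^{-3}\bigr)\;=\;\frac{r^3-1}{3r^3-1},
\]
which holds precisely because $\ratio=\frac{2r^3}{3r^3-1}$. Dividing this common value by the limiting baseline $1-r^{-1}=\frac{r-1}{r}$ and cancelling the factor $r-1$ via $r^3-1=(r-1)(r^2+r+1)$ collapses \emph{both} amplification ratios to the single quantity
\[
g(r)\;=\;\frac{r\,(r^2+r+1)}{3r^3-1}.
\]
Thus, up to the $\calO(N^{-1/3})$ and $\calO(N^{-1})$ errors, the factor by which $A_{N,\ratio}$ beats $K_N$ is the same $g(r)$ under both updatings, and the theorem reduces to showing $g(r)>1.04$ throughout $(1,1.2)$.

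For this reduced claim I would establish that $g$ is strictly decreasing on $(1,1.2)$: differentiating, the numerator of $g'$ equals $-(3r^4+6r^3+3r^2+2r+1)$, which is negative for all $r>0$. Hence on the open interval the infimum of $g$ is approached as $r\to1.2^-$, where $g(1.2)=\frac{1.2\cdot 3.64}{4.184}\doteq1.044$, while $g(1)=\tfrac32$ marks the other end. Consequently $g(r)>g(1.2)>1.04$ for every $r\in(1,1.2)$, leaving a strictly positive gap $\delta(r):=g(r)-1.04>0$.

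Finally I would fix $r\in(1,1.2)$ and choose $N$ large enough that the additive errors $\calO(N^{-1/3})$, $\calO(N^{-1})$ together with the baseline corrections $(1+o(1))$ are all dominated by $\delta(r)\cdot(1-r^{-1})$; this yields $\fp(A_{N,\ratio})>1.04\,\fp(K_N)$ and $\fpd(A_{N,\ratio})>1.04\,\fpd(K_N)$ at once. The one delicate point is the uniformity of this last step at the two ends of the interval: as $r\to1.2^-$ the gap $\delta(r)$ shrinks toward about $0.004$, and as $r\to1^+$ the slack $\delta(r)(1-r^{-1})$ shrinks to $0$ because the baseline itself vanishes, so in both regimes one must take $N$ sufficiently large \emph{depending on $r$} --- which is exactly what the statement permits. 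Everything else is routine tracking of the constants hidden in \Cref{lem:Bd_fix} and \Cref{lem:dB_fix}.
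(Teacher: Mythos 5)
Your proposal is correct and follows essentially the same route as the paper: plug $\ratio=\frac{2r^3}{3r^3-1}$ into \Cref{lem:Bd_fix} and \Cref{lem:dB_fix} so that both lower bounds equal $\frac{r^3-1}{3r^3-1}$ minus vanishing errors, divide by the baseline $1-r^{-1}$ to reduce everything to the single ratio $\frac{r(1+r+r^2)}{3r^3-1}$, and observe that this exceeds $1.04$ throughout $(1,1.2)$ before taking $N$ large (depending on $r$). Your only addition is the explicit derivative computation showing $g$ is decreasing, which justifies the paper's implicit use of the endpoint value $1.04398$ at $r=1.2$ as the infimum --- a detail the paper's proof asserts without proof.
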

\begin{proof}
    Again, $\fp(K_N) = \frac{1-r^{-1}}{1-r^{-N}}$ and $\fpd(K_N) = \frac{N-1}{N}\frac{1-r^{-1}}{1-r^{-N+1}}$.
    Setting $N$ so big that $\frac{1}{1-r^{-N+1}} < 1.00001$, we have that 
    $\fp(K_N) < (1-r^{-1})\cdot 1.00001$ and $\fpd(K_N) < (1-r^{-1})\cdot 1.00001$.

    From \Cref{lem:Bd_fix}, plugging $\ratio$, we have that the fixation probability is at least $\frac{r^3-1}{3r^3-1}- \calO(N^{-1/3})$ for the Birth-death process.
    We have
    \begin{align*}
        (1-r^{-1})\cdot 1.00001 \cdot X &< \frac{r^3-1}{3r^3-1}- \calO(N^{-1/3})\\
         1.00001 \cdot X &< \frac{r(1+r+r^2)}{3r^3-1}- \calO(N^{-1/3})\\
         1.00001 \cdot X &<  1.04398 - \calO(N^{-1/3})\,.
    \end{align*}

    From \Cref{lem:dB_fix}, plugging $\ratio$, we have that the fixation probability is at least $ \left(\frac{1}{2}\frac{2r^3}{3r^3-1} (1 - r^{-3}) - \calO(N^{-1})\right)$ for the death-Birth process.
    We have
    \begin{align*}
        (1-r^{-1})\cdot 1.00001 X &< \left(\frac{1}{2}\frac{2r^3}{3r^3-1} (1 - r^{-3}) - \calO(N^{-1})\right)\\
        1.00001 X &< \left(\frac{1}{2}\frac{2r^3}{3r^3-1} (1 + r^{-1} + r^{-2}) - \calO(N^{-1})\right)\\
        1.00001 X &< \left(\frac{r(1+r+r^2)}{3r^3-1} - \calO(N^{-1})\right)\\
        1.00001 \cdot X &< 1.04398 - \calO(N^{-1})\,.
    \end{align*}
\end{proof}



\end{document}